\documentclass[a4paper,USenglish,cleveref,pdfa]{lipics-v2021}

\pdfoutput=1
\hideLIPIcs
\nolinenumbers

\usepackage{preamble}

\title{Faster Parameterized Broadcasting}
\titlerunning{Faster Parameterized Broadcasting}

\author{\'{E}douard Bonnet}
{CNRS, ENS de Lyon, Universit\'{e} Claude Bernard Lyon 1, LIP UMR 5668, Lyon, France
\and \url{http://perso.ens-lyon.fr/edouard.bonnet}}
{edouard.bonnet@ens-lyon.fr}
{https://orcid.org/0000-0002-1653-5822}
{Supported by the ANR project TWIN-WIDTH (ANR-21-CE48-0014-01).}

\author{Carl Feghali\footnote{We are deeply saddened to report that our colleague and friend, Carl Feghali, recently passed away. We greatly enjoyed working on this project with him.}}
{CNRS, ENS de Lyon, Universit\'{e} Claude Bernard Lyon 1, LIP UMR 5668, Lyon, France
\and \url{https://perso.ens-lyon.fr/carl.feghali/}}
{carl.feghali@ens-lyon.fr}
{https://orcid.org/0000-0001-6727-7213}
{}

\author{Manolis Vasilakis}
{Universit\'{e} Paris Dauphine -- PSL, CNRS UMR7243, LAMSADE, Paris, France
\and \url{https://manolisvasilakis.github.io/}}
{emmanouil.vasilakis@dauphine.eu}
{https://orcid.org/0000-0001-6505-2977}
{Supported by the ANR project S-EX-AP-PE-AL (ANR-21-CE48-0022) and
    the GDR ROD Mobility Scholarship.}

\authorrunning{\'{E}. Bonnet, C. Feghali, and M. Vasilakis}

\Copyright{\'{E}douard Bonnet, Carl Feghali, and Manolis Vasilakis}

\ccsdesc[500]{Theory of computation~Parameterized complexity and exact algorithms}

\keywords{Parameterized Complexity, Structural Graph Parameters, Telephone Broadcast}

\category{} 

\funding{}

\EventEditors{Neeldhara Misra and Magnus Wahlstr\"{o}m}
\EventNoEds{2}
\EventLongTitle{18th International Symposium on Parameterized and Exact Computation (IPEC 2023)}
\EventShortTitle{IPEC 2023}
\EventAcronym{IPEC}
\EventYear{2023}
\EventDate{September 6--8, 2023}
\EventLocation{Amsterdam, The Netherlands}
\EventLogo{}
\SeriesVolume{285}
\ArticleNo{22}

\begin{document}

\maketitle

\begin{abstract}
Given a connected graph $G$ and a source $s \in V(G)$,
what is the smallest number of rounds necessary for all
vertices of $G$ to receive a message initially only held by $s$,
where at each round every informed vertex passes the message to one of its neighbors?
This problem is called \textsc{Telephone Broadcast} and is suprisingly hard:
it remains NP-hard on cycles intersecting at a~single shared vertex, in particular,
graphs of pathwidth 2 with a linear feedback vertex set of size 1, as well as on graphs with treedepth at~most~6 [Egami et al.; MFCS '25].
Vertex cover number, vertex integrity, and distance to clique are among the few parameters for
which \textsc{Telephone Broadcast} is~fixed-parameter tractable.
There is a~$2^{\mathcal{O}(\mathrm{vc}^3)} n^{\mathcal{O}(1)}$-time algorithm parameterized by vertex cover number $\mathrm{vc}$ [Fomin, Fraigniaud, Golovach; TCS '24],
a~double-exponential algorithm parameterized by vertex integrity $\mathrm{vi}$,
and a~$2^{\mathcal{O}(k^2)} n^{\mathcal{O}(1)}$-time algorithm parameterized by distance to clique $k$ [Egami et al.; MFCS '25].

In this paper, we give improved parameterized algorithms for \textsc{Telephone Broadcast} with running times
$2^{\mathcal{O}(\mathrm{vc} \log \mathrm{vc})} n^{\mathcal{O}(1)}$,
$2^{\mathcal{O}(\mathrm{vi}^2 \log \mathrm{vi})} n^{\mathcal{O}(1)}$,
and $2^{\mathcal{O}(k \log k)} n^{\mathcal{O}(1)}$.
The main ingredient that makes our algorithms faster is a~Turing reduction to edge-weighted \textsc{$b$-Matching}.

\end{abstract}

\section{Introduction}

Broadcasting is one of the most basic primitives in communication networks:
a message initially held by one source has to reach every vertex of the network.
In the classical \emph{telephone model}~\cite{networks/HedetniemiHL88}, the
network is represented by a connected graph $G$, time is divided into
synchronous rounds, and in each round every informed vertex may transmit the
message to at most one uninformed neighbor.  For a source $s\in V(G)$, the
\emph{broadcast time} $b(G,s)$ is the minimum number of rounds needed to inform
all vertices.  If $|V(G)|=n$, then $\ceil{\log_2 n}\le b(G,s)\le n-1$,
and both bounds are tight: complete graphs realize the lower bound, while a
path whose source is an endpoint realizes the upper bound.  Equivalently, a~broadcast protocol can be represented by a spanning tree rooted at $s$ together
with an ordering of the children of every vertex. {\TB} then asks,
given a~connected graph $G$, a~source $s$, and an integer $t$, whether $b(G,s)\le t$.

The problem is old, natural, and algorithmically challenging.  Slater,
Cockayne, and Hedetniemi~\cite{siamcomp/SlaterCH81} initiated the systematic
study of the problem, proving that it is NP-complete in general but
polynomial-time solvable on trees.  The surrounding literature is extensive;
see, for instance, the survey of Hedetniemi, Hedetniemi, and
Liestman~\cite{networks/HedetniemiHL88}, the survey of Fraigniaud and
Lazard~\cite{dam/FraigniaudL94}, and the book of Hromkovic et
al.~\cite{sp/Hromkovic05}.  The computational difficulty already appears on
quite restricted graph classes: Jansen and M{\"{u}}ller~\cite{tcs/JansenM95}
proved NP-hardness on chordal graphs and grid graphs, among others.

The parameterized-complexity study of \TB\ was initiated by Fomin,
Fraigniaud, and Golovach~\cite{tcs/FominFG24}.  They gave a
$3^n n^{\bO(1)}$-time exact algorithm and proved fixed-parameter tractability
for feedback edge set number, vertex cover number, and the below-guarantee
parameter $n-t$.  Since a protocol of length $t$ can inform at most $2^t$
vertices, the exact algorithm also gives an FPT algorithm parameterized by
$t$, but with double-exponential dependence on~$t$.  This is essentially the
right order of magnitude under the Exponential-Time Hypothesis (ETH), by a consequence of the restricted
spanning-tree hardness of Papadimitriou and
Yannakakis~\cite{jacm/PapadimitriouY82}, as discussed by
Tale~\cite{tcs/Tale25}.

The recent structural picture is quite sharp on the negative side.  Tale
proved that \TB\ remains NP-complete on graphs of feedback vertex set number~1, and hence on graphs of treewidth~2~\cite{tcs/Tale25}; in fact, the same
work gives hardness for graphs at vertex-deletion distance~2 from a linear
forest.  Aminian, Kamali, Seyed-Javadi, and
Sumedha~\cite{icalp/AminianKJS25} proved NP-completeness on cactus graphs of
pathwidth~2, while also obtaining approximation results for cactus graphs and
bounded-pathwidth graphs.  Egami et al.~\cite{mfcs/EgamiGHKLMNOV025} conducted 
a~thorough study of the problem under various parameterizations and, apart from improving
(and significantly simplifying) the hardness result of~\cite{icalp/AminianKJS25},
pushed the hardness further to graphs of bounded treedepth, and complemented this with
positive results for vertex integrity, distance to clique, the combined
parameter $t+\tw$, and FPT approximation for dense parameters such as clique
cover and cluster vertex deletion.  Bringolf, Harutyunyan, Kamali, and
Seyed-Javadi~\cite{arxiv/BringolfHKS26} recently continued this line by
studying hardness and approximation for structured graph classes.

\subparagraph{Our Contribution.}
We improve the parameter dependence of the known positive results for vertex
cover number, vertex integrity, and distance to clique.
For vertex cover number $\vc$, in \cref{thm:algo:vc} we obtain an algorithm solving \TB\
in time $\vc^{\bO(\vc)}n^{\bO(1)}$, improving over the
$2^{\bO(\vc^3)}n^{\bO(1)}$-time algorithm of Fomin, Fraigniaud, and
Golovach~\cite{tcs/FominFG24}.
For vertex integrity, our algorithm runs in $\vi^{\bO(\vi^2)}n^{\bO(1)}$ time (\cref{thm:algo:vi}),
improving the double-exponential dependence of the
algorithm of Egami et al.~\cite{mfcs/EgamiGHKLMNOV025}.
Finally, we consider graphs whose vertex-deletion distance to clique is $k$,
and in \cref{thm:algo:dc} we obtain a $k^{\bO(k)}n^{\bO(1)}$-time algorithm,
improving the previous $2^{\bO(k^2)}n^{\bO(1)}$-time bound of Egami et al.~\cite{mfcs/EgamiGHKLMNOV025}.

On a high level, all three of our algorithms use the same organizing idea.
We first guess a small \emph{template} of the broadcast protocol, recording only the transmissions
that interact nontrivially with the relevant modulator.  For vertex cover, the
template contains the cover and the few independent-set vertices that relay
the message back into it.  For vertex integrity, the template contracts the few
components outside the modulator that are entered several times or that later
send the message back.  For distance to clique, the template keeps only the
clique vertices that interact with the modulator.  All other vertices or
components are left anonymous.  Once the template is fixed, the remaining task
is to realize the abstract slots and assign all leftover leaves to available
transmissions; we formulate this as an edge-weighted {\bMatching}
instance, which is polynomial-time solvable~\cite{book/Schrijver03}. We note that a
similar approach already appears in~\cite{tcs/FominFG24} (and subsequently in~\cite{mfcs/EgamiGHKLMNOV025}). 
However, a~refinement of it leads to significantly faster and simpler algorithms.

\subparagraph{Further Related Work.}
Several classical lines of research surround \TB.  One asks for exact
algorithms on special graph classes.  Besides trees~\cite{siamcomp/SlaterCH81},
polynomial-time algorithms are known for subclasses of cactus
graphs~\cite{jco/CevnikZ17}, unicyclic graphs~\cite{jco/HarutyunyanM08},
fully connected trees~\cite{join/GholamiHM23}, and generalized windmill
graphs, equivalently, graphs of twin-cover number~1~\cite{algorithms/AmbashankarH15,jgaa/Damaschke24}.
Another classical direction is extremal: one asks how sparse a graph can be
while still achieving the information-theoretic optimum $\ceil{\log_2 n}$;
Grigni and Peleg~\cite{siamdm/GrigniP91} gave tight bounds of this flavor.

The problem is also well studied from the viewpoint of approximation.  Ravi
gave an $\bO(\log^2 n/\log\log n)$-approximation~\cite{focs/Ravi94}, Kortsarz
and Peleg~\cite{siamdm/KortsarzP95} gave an algorithm producing a protocol of
length at most $2b(G,s)+\bO(\sqrt n)$, and Bar-Noy, Guha, Naor, and
Schieber~\cite{siamcomp/Bar-NoyGNS00} obtained an $\bO(\log n)$-approximation.
Elkin and Kortsarz~\cite{siamcomp/ElkinK05} gave a combinatorial logarithmic
approximation for the directed version, and later obtained the
$\bO(\log n/\log\log n)$ guarantee for telephone multicast~\cite{jcss/ElkinK06}.
Better approximation ratios are known for specific graph
classes~\cite{caldam/BhabakH15,join/BhabakH19,ciac/HarutyunyanH23,siamdm/KortsarzP95},
and inapproximability results are known as well~\cite{siamcomp/ElkinK05,approx/Schindelhauer00}.

Finally, many variants of the telephone model have been studied.  These include
line- and path-communication models~\cite{networks/Farley80,siamdm/KortsarzP95}
and multicast versions, where only a specified terminal set must be informed.
Most of the approximation algorithms above extend to multicast, and directed
or otherwise modified multicast variants have also been
considered~\cite{algorithmica/ElkinK06,algorithmica/HathcockKR26}.

\section{Preliminaries}\label{sec:preliminaries}

\subsection{Basic notation and broadcasting}

\subparagraph{Basic notation.}
We use standard graph notation~\cite{Diestel17} and assume familiarity with
the basic notions of parameterized complexity~\cite{books/CyganFKLMPPS15}.
All graphs are finite, simple, and undirected, unless explicitly stated otherwise.
For integers $x,y\in\Z$, let $[x,y]:=\setdef{z\in\Z}{x\le z\le y}$, and,
for a positive integer~$x$, let $[x]:=[1,x]$.
For a graph $G=(V,E)$ and a set $S\subseteq V$, we write $G[S]$ for the
subgraph induced by~$S$, and $G-S$ for $G[V\setminus S]$.
We denote the set of connected components of $G$ by $\cc(G)$.

\subparagraph{Telephone broadcast.}
Let $G$ be a connected graph and let $s\in V(G)$ be the source.
A~\emph{broadcast protocol} of length $t$ is a~spanning tree $T$ of $G$ (called its \emph{broadcast tree}),
rooted at~$s$, together with a time-stamp function $\lambda \colon E(T) \to [t]$.
Orient every edge of $T$ away from~$s$.
Say, $uv$ is oriented from the parent $u$ to the child~$v$.
Then $\lambda(uv)$ is the round in which $u$ informs~$v$.
Let $\tau_\lambda(s)=0$, and let $\tau_\lambda(v)$ be the label of the unique edge of~$T$
entering $v$ for every $v\ne s$.

The pair $(T,\lambda)$ is \emph{feasible} if $\tau_\lambda(u)<\lambda(uv)$ for every edge
oriented from $u$ to $v$, and if the outgoing edges of each vertex have pairwise distinct labels.
These are exactly the telephone constraints: a vertex transmits only after being informed,
and to at most one neighbor in each round.
Conversely, every feasible pair $(T,\lambda)$ defines a broadcast protocol.
The \emph{broadcast time} $b(G,s)$ is the minimum length of a broadcast
protocol for source~$s$ in~$G$.


\subsection{Templates}\label{sec:templates}

Our algorithms repeatedly guess a bounded part of a broadcast protocol.  We
use one common language for these guesses.  A \emph{template} records the named
vertices, a bounded number of abstract placeholders, and the transmissions
between them.  The remaining vertices or components are selected later,
by a {\bMatching} instance.

\begin{definition}[Template]\label{def:template}
    A \emph{template} is a tuple $\mathcal{T}=(H,s,S,Q,\lambda)$,
    where $S$ and $Q$ are disjoint finite sets, $s \in S$, $H=(V_H,A)$ is a
    loopless directed multigraph with $S\cup Q\subseteq V_H$, and
    $\lambda \colon A \to \N$.
    The vertices of $S$ are called \emph{named vertices};
    the vertices of $Q$ are called \emph{slots}.
    The vertices of $V_H\setminus(S\cup Q)$, if any, are called
    \emph{auxiliary vertices} and have no incoming arcs.
    Parallel arcs in~$H$ are allowed.
\end{definition}

When the remaining data are clear from the context, we sometimes refer to
$(H,\lambda)$ itself as the template.

\subparagraph{Meaning of the labels.}
An arc $e=(u,v) \in A$ with $\lambda(e)=\rho$ represents a transmission in round~$\rho$
from the object represented by~$u$ to the object represented by~$v$, unless $u$ is auxiliary.
Arcs leaving auxiliary vertices are interpreted by the algorithm that introduced them.
The labels determine the first labeled entry into every object that has an incoming arc.
Namely, for an object~$z \in S\cup Q$ we write $\tau_\lambda(z):= \min \setdef{\lambda(e)}{e\in A \text{ enters } z}$,
and we set $\tau_\lambda(s)=0$ for the source.
The consistency conditions below ensure that every vertex of $S\setminus\{s\}$ and every
slot $q \in Q$ has an incoming arc, so $\tau_\lambda$ gives its reception time.
Auxiliary vertices are only bookkeeping devices; they are not realized by vertices or components
of the input graph, and no reception time is associated with them.
In the distance-to-clique algorithm we use one such vertex~$\phi$ to denote a transmission supplied later from inside the clique.

\begin{figure}[!ht]
  \centering
  \begin{tikzpicture}[
      x=1cm,
      y=1cm,
      obj/.style={circle,draw,minimum size=7mm,inner sep=0pt,font=\scriptsize},
      slot/.style={circle,draw,fill=black!12,minimum size=7mm,inner sep=0pt,font=\scriptsize},
      edge/.style={->,semithick},
      stamp/.style={fill=white,inner sep=1pt,font=\scriptsize},
      setbox/.style={draw,rounded corners,inner sep=5pt},
      lab/.style={font=\small}
    ]

    \node[obj] (s) at (0,0) {$s$};
    \node[obj] (u1) at (-1.65,-1.25) {$u_1$};
    \node[slot] (q1) at (1.15,-1.25) {$q_1$};
    \node[obj] (u2) at (-1,-2.55) {$u_2$};
    \node[slot] (q2) at (2.75,-2.55) {$q_2$};

    \draw[edge] (s) -- node[stamp] {$1$} (u1);
    \draw[edge] (u1) -- node[stamp] {$3$} (q1);
    \draw[edge] (q1) -- node[stamp] {$4$} (u2);
    \draw[edge] (u2) -- node[stamp] {$5$} (q2);
    \draw[edge] (u2) to[bend left=20] node[stamp] {$7$} (q2);

    \node[setbox,fit=(s)(u1)(u2)] {};
    \node[setbox,fit=(q1)(q2)] {};
    \node[lab] at (-0.75,-3.4) {named vertices $S$};
    \node[lab] at (1.95,-3.4) {slots $Q$};
  \end{tikzpicture}
  \caption{An example of a template.}
  \label{fig:time-stamped-template}
\end{figure}

\subparagraph{Consistency.}
When a template~$(H,s,S,Q,\lambda)$ is checked against an input graph $G$, the named vertices are
actual vertices of~$G$, so $S \subseteq V(G)$, and $s \in S$ is the source.
We say that the template is \emph{consistent} with respect to~$G$ if it
satisfies the following common local checks:
\begin{itemize}
    \item $s$ has no incoming arc;
    \item each $u \in S \setminus \{s\}$ has exactly one incoming arc;
    \item all outgoing arcs of $u \in S$ have distinct labels, and each label
    is greater than $\tau_\lambda(u)$;
    \item every arc with both endpoints in $S$ is an edge of~$G$;
    \item every slot has at least one incoming arc;
    \item for every $q \in Q$, each outgoing arc $e$ from $q$ satisfies
    $\tau_\lambda(q) < \lambda(e)$.
\end{itemize}
Thus every represented transmission leaving an object is labeled later than
the first labeled entry into that object.
Each algorithm may impose additional structural requirements,
for example that labels belong to particular sets of rounds.

\subparagraph{Slots and realizations.}
Slots are placeholders.  In this paper, each slot is realized either by one
unnamed vertex or by one connected component of $G-S$; the algorithm at hand
specifies which type is allowed.  Thus a slot need not have a unique incoming
arc, and a~template need not be a~tree or even be acyclic.  Arcs incident with a
slot record only boundary transmissions.  The internal behavior of the realized
object and the choice of the actual endpoint inside it are checked when the
slot is realized.

Formally, a realization of $\mathcal{T}$ maps each slot to a concrete object in
$G-S$, with different slots mapped to disjoint objects.  It is valid if all
represented boundary transmissions not involving auxiliary vertices can be
implemented along edges of~$G$, and if every realized slot admits a local
protocol respecting the time-stamps of its incident arcs.
Auxiliary vertices, if present, are not realized.

\subparagraph{Extensions.}
An \emph{extension} of a realized template adds the transmissions not
represented in~$\mathcal{T}$, e.g., transmissions to independent-set vertices that do not
inform any vertices in the vertex cover, or transmissions inside a~clique.
A template is \emph{extendable} if some realization and some extension form
a valid broadcast protocol within the required number of rounds.
Our algorithms enumerate a bounded family of templates and test extendability by a~polynomial-time matching subroutine.

\subparagraph{Use in the algorithms.}
For vertex cover number, the template is a rooted tree: $S$ is the vertex cover and
the slots are the independent-set vertices that inform vertices of the cover,
each realized by a distinct independent-set vertex.  For vertex integrity,
$S$ is the modulator and the slots are the components of $G-S$ that either inform
vertices of the modulator or receive the message from the modulator multiple times,
each realized by a distinct connected component; here the template is a directed
multigraph and may have several arcs entering the same slot.
For distance to clique, $S$ is the modulator to the clique and the slots
are the clique vertices informing the modulator.  The unique auxiliary vertex~$\phi$
records slots whose first entry is supplied later from inside the clique.

\section{Vertex Cover Number}

We start with the parameterization by vertex cover number, denoted by~$\vc$.
Fomin, Fraigniaud, and Golovach~\cite{tcs/FominFG24} proved that {\TB} is
fixed-parameter tractable parameterized by~$\vc$, with running time
$2^{\bO(\vc^3)} n^{\bO(1)}$.  We improve this dependence to
$\vc^{\bO(\vc)} n^{\bO(1)}$.  Our algorithm keeps the main structural insight
of their proof: there is an optimal protocol in which the vertex cover is
informed very early.  We restate this lemma here and later use it to guess only
the part of the broadcast tree that is relevant to the vertex cover.

\begin{lemma}[{\cite[Lemma 4]{tcs/FominFG24}}]\label{lem:prefix-bp}
    For every connected graph $G$, vertex $s \in V(G)$, and vertex cover $S$ of~$G$,
    there is an optimal broadcast protocol for $(G,s)$ such that every vertex of $S$ receives the message by round $2|S|-1$.
\end{lemma}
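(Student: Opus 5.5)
The plan is an exchange argument on an optimal protocol. Write $k=|S|$. Among all optimal broadcast protocols $(T,\lambda)$ for $(G,s)$, fix one that minimizes the sorted vector of reception times $(\tau_\lambda(x))_{x\in S}$ lexicographically, or equivalently $\sum_{x\in S}\tau_\lambda(x)$. I will show that in this protocol the number of informed vertices of $S$ grows by at least one in every window of two consecutive rounds, as long as some vertex of $S$ is still uninformed. The first window may be shorter: if $s\notin S$, then $s$ has a neighbor in $S$ and can inform it in round $1$. Summing these contributions gives that all of $S$ is informed by round $1+2(k-1)=2k-1$ when $s\notin S$, and by round $2(k-1)\le 2k-1$ when $s\in S$.

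The structural fact behind the two-round progress is a distance bound. Let $I_r$ be the set of vertices informed after round $r$, and suppose $S\not\subseteq I_r$. Take a shortest path from $I_r$ to $S\setminus I_r$. Every vertex outside $S$ has all its neighbors in $S$, because $S$ is a vertex cover. Since $G$ is connected, such a path exists, and it is $y,x$ or $y,z,x$, where $y\in I_r$, $z\notin S$ is uninformed, and $x\in S\setminus I_r$. In the first case $x$ could be informed in round $r+1$; in the second case $x$ could be informed in round $r+2$ via $y\to z\to x$.

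The core step is to show that if the fixed protocol does not inform any new vertex of $S$ in rounds $r+1$ and $r+2$, then we can modify it without increasing its length while strictly decreasing the potential. The modification makes $y$ inform $z$ (or $x$) in round $r+1$, and, in the path case, makes $z$ inform $x$ in round $r+2$. The whole subtree of $x$ in $T$ then keeps its relative schedule but is shifted earlier by $\tau_\lambda(x)-(r+2)\ge 1$ rounds. This is feasible, because shifting a subtree earlier never violates the telephone constraints inside it.

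The displaced transmission needs separate care. In round $r+1$, the vertex $y$ originally called some child $w$, or nobody. I would let $w$ be informed by the call that previously informed $z$ or $x$ (whose time is at least $r+1$). Alternatively, I would append $y$'s old calls one round later, using the fact that $x$'s subtree has been shifted earlier by at least one round, which frees compensating capacity. The transmissions $z$ made originally are handled similarly. Since $z\notin S$, all of $z$'s children lie in $S$. Therefore, if $z$ already called a vertex of $S$ in round $r+2$, a new vertex of $S$ would be informed in that round, contradicting the assumption. So $z$'s round-$(r+2)$ slot is free or can be shifted.

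I expect this last bookkeeping to be the main obstacle: proving that the rerouting never pushes any vertex past round $t$. To handle it, I would first try a potential that counts $|I_r\cap S|$ round by round, lexicographically. I would also allow the exchange to swap entire subtrees of $y$'s round-$(r+1)$ child $w$ and of $x$, rather than merely delaying $w$. Swapping subtrees preserves all completion times up to a relabeling. It loses feasibility only through edges of $G$, and the edges needed to avoid this are exactly those guaranteed by the shortest-path structure above.
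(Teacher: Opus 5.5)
The overall plan is reasonable. Your observation that every uninformed vertex of $S$ is within distance $2$ of the informed set is correct, and the two-round progress property you aim for does hold for a suitable optimal protocol, so your count to $2|S|-1$ would go through. The gap is that all of the lemma's content sits in the exchange step, which you leave open, and the repairs you sketch do not work.

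Since no vertex of $S$ is informed in round $r+1$, the vertex $w$ that $y$ originally called in that round lies in $I=V\setminus S$, so $N_G(w)\subseteq S$. Handing $w$ to the call that previously informed $x$ fails whenever the old parent of $x$ lies in $I$, because $I$ is independent. Even when that edge exists, $w$ and its whole subtree are delayed by $\tau_\lambda(x)-(r+1)\ge 2$ rounds. If $w$ has children in $S$ whose subtrees finish exactly at round $t$, the protocol becomes too long, and your potential can even increase.

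For the same reason, swapping the subtrees of $w$ and $x$ does not preserve completion times up to relabeling: one subtree moves earlier and the other moves later by the same amount. The shortest path $y,z,x$ also supplies none of the edges such a swap needs, for instance the edge between $w$ and the old parent of $x$. Finally, postponing all later calls of $y$ by one round delays every subtree below $y$. The slot freed at the old parent of $x$ does nothing for those subtrees.

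The missing idea is to use the shape of the broadcast tree $T$ rather than rerouting along new paths of $G$. As in the proof of \cref{thm:algo:vc}, let $T^\star$ be the rooted subtree of $T$ formed by $s$, $S$, and the vertices of $I$ that have a child in $S$. It has at most $2|S|$ vertices (at most $2|S|-1$ if $s\in S$). Every vertex outside $T^\star$ is a leaf of $T$ whose parent lies in $S$.

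Now reschedule: each $u\in V(T^\star)$ first calls its $T^\star$-children in rounds $\tau'(u)+1,\tau'(u)+2,\dots$ in their original order, and only afterwards its leaf children.
\begin{itemize}
\item The $j$-th $T^\star$-child of $u$ was originally called no earlier than round $\tau_\lambda(u)+j$. Hence, by induction, $\tau'(u)\le\tau_\lambda(u)$ for every $u$.
\item Each $u$ keeps exactly its $\deg^+_T(u)\le t-\tau_\lambda(u)$ children, so its last call still happens by round $t$.
\item The reception time of $v\in V(T^\star)$ is the sum of the ranks of the vertices on its root path among their $T^\star$-siblings. The sets of siblings up to and including each path vertex are pairwise disjoint subsets of $V(T^\star)\setminus\{s\}$. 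Therefore $\tau'(v)\le |V(T^\star)|-1\le 2|S|-1$.
\end{itemize}
Your two-round progress property also holds immediately in this protocol. If you prefer an exchange argument, the safe exchange moves a call to a $T^\star$-child into an earlier idle round or leaf call of the \emph{same} parent. The subtree then only moves earlier, and the displaced leaf is still served by round $t$.
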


We can now present the improved algorithm parameterized by vertex cover number.

\begin{theorem}\label{thm:algo:vc}
    {\TB} can be solved in time $\vc^{\bO(\vc)} n^{\bO(1)}$,
    where $\vc$ is the vertex cover number of the input graph.
\end{theorem}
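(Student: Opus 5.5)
We compute a minimum vertex cover $S$ with $|S|=\vc$ in time $2^{\vc}n^{\bO(1)}$, possibly adding $s$ so that $s\in S$ and $k:=|S|\le \vc+1$. Let $I=V(G)\setminus S$, an independent set. By \cref{lem:prefix-bp}, there is an optimal protocol $(T,\lambda)$ in which every vertex of $S$ is informed by round $2k-1$. We may assume $t\ge 2k-1$; otherwise $t<2k$ and, since an informed set at most doubles each round, $n\le 2^{t}\le 4^{k}$, so the exact $3^n n^{\bO(1)}$ algorithm already runs within the claimed bound.

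\textbf{Template guess.} In $(T,\lambda)$, call a vertex of $I$ a \emph{relay} if it has a child in $S$. There are at most $k-1$ relays. The template has named vertices $S$, one slot per relay, and the arcs of $T$ that enter a vertex of $S$ or a slot (every relay's parent lies in $S$), together with the arcs from slots to $S$. This is a rooted tree on at most $2k$ nodes, and every label is at most $2k-1$ because every vertex of $S$ is informed by then, so the arcs between $S$ and relays occur by round $2k-1$. The number of such labelled trees is $(2k)^{\bO(k)}\cdot(2k)^{\bO(k)}=k^{\bO(k)}$. We enumerate them and keep only the consistent ones. Each slot has a unique incoming arc (parent $p_q\in S$) and a set $N_q\subseteq S$ of children, so it must be realized by a distinct vertex $x\in I$ with $p_q\cup N_q\subseteq N(x)$.

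\textbf{Extension as $b$-matching.} Given the template, every other vertex of $I$ is a leaf of $T$ informed by some $u\in S$ at a round $\rho\in[t]$ not used by $u$'s template arcs, with $\rho>\tau_\lambda(u)$. Slot vertices are likewise informed at their template rounds. We build a bipartite graph. The left side consists of all $x\in I$. The right side has one node $q$ per slot, and for each $u\in S$ one node $u$ whose capacity $b(u)$ is the number of free rounds of $u$ in $(\tau_\lambda(u),t]$. A non-relay $x$ is adjacent to $u$ whenever $x\in N(u)$. A vertex $x$ is adjacent to slot $q$ if $N(x)\supseteq \{p_q\}\cup N_q$, and each slot has capacity $1$. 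Every $x$ has capacity $1$, and we require all of $I$ and all slots to be saturated. The round chosen for a given leaf is irrelevant, since leaves only need some free round of their parent by time $t$, so an assignment respecting capacities is exactly an extension. A perfect-type $b$-matching saturating $I$ and the slots can be tested in polynomial time, for instance via max-flow. Correctness follows in both directions. The optimal protocol yields a guessed template and a feasible matching. Conversely, a matching plus the template gives a feasible $(T,\lambda)$ of length at most $t$, because template consistency ensures that $S$ and the relays respect the telephone constraints and leaves get distinct free rounds.

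The main obstacle is the counting and correctness of the template: it must guess only $\bO(k)$ objects with labels bounded by $\bO(k)$ (rather than $t$), which \cref{lem:prefix-bp} guarantees. We must also make sure that non-template transmissions from $S$ interact only via counts of free rounds, so that the matching captures them with no further guessing. Edge weights are not needed here. The total running time is $k^{\bO(k)}n^{\bO(1)}=\vc^{\bO(\vc)}n^{\bO(1)}$.
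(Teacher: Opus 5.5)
Your approach is essentially the paper's. The template is the same: the vertex cover plus at most $k-1$ relay slots, with labels bounded by $2k-1$ via \cref{lem:prefix-bp}, giving $k^{\bO(k)}$ guesses. The extension is also reduced in the same way, to a capacitated assignment of $I$ to slots and to free rounds of the vertices of $S$. One step, however, fails as written: the reduction to $t\ge 2k-1$. From $t<2k$ you only get $n\le 2^t\le 4^k$. The $3^n n^{\bO(1)}$ exact algorithm then costs up to $3^{4^k}$, which is doubly exponential in $k$, not $k^{\bO(k)}$.

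The assumption is not actually needed, but you do use it: in the converse direction you need all template labels to be at most $t$, or the constructed protocol overruns the deadline. Replace the case distinction by enumerating labels only in $[\min\{t,2k-1\}]$. This is harmless, because the protocol given by \cref{lem:prefix-bp} has length $b(G,s)\le t$, so all its time-stamps are at most $\min\{t,2k-1\}$. The paper gets the same effect by discarding templates with $t-\tau_\lambda(u)-|\children(u)|<0$ for some $u\in S$. This catches every template containing a label larger than $t$, since every slot has a child in $S$ that is informed even later.

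A minor difference concerns the matching. You require a $b$-matching that saturates $I$ and all slots, instead of using the paper's weights ($2$ on slot edges, $1$ on $S$--$I$ edges, threshold $|I|+|Q|$). That works, but a plain maximum flow from $I$ does not force the slots to be saturated: the vertices of $I$ may all be absorbed by the $S$-side capacities. You need lower bounds on the slot arcs (feasible flow with demands, which reduces to max-flow). Equivalently, use the paper's weighting, whose only role is to make every maximum-weight solution match every slot.
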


\begin{proof}
    Let $(G,s,t)$ be an instance of {\TB}, where $G=(V,E)$ is connected and $|V|=n$.
    Assume that $t<n$, since otherwise the answer is trivially yes.
    We first compute in time $2^{\bO(\vc)} n^{\bO(1)}$ a~vertex cover $S_0$ of $G - s$ of size at most $\vc$~\cite{tcs/ChenKX10,stacs/HarrisN24}.
    We set $S:=S_0 \cup \{s\}$, $k:=|S| \le \vc+1$,
    and let $I:=V \setminus S$ and $L:=2k-1$.
    Note that $I$ is an independent set of~$G$, since $S$ is a~vertex cover of~$G$.
    By~\cref{lem:prefix-bp}, there is an optimal broadcast protocol for $G$
    with source~$s$ such that every vertex of $S$ receives the message by round~$L$.
    In the following, we search for such an optimal protocol.

    Fix such a protocol, and let $T$ be its broadcast tree.
    We first describe the bounded part of $T$ that we guess.
    A vertex of $I$ is \emph{marked} if it has at least one child in $S$.
    Since $I$ is independent, every marked vertex has its parent in $S$ and all its children in $S$.
    Moreover, every vertex of $S\setminus\{s\}$ has a unique parent.
    Thus, if we assign to every marked vertex one of its children in $S$,
    we obtain an injection from the set of marked vertices to $S\setminus\{s\}$.
    Hence there are at most $k-1$ marked vertices.

    The vertices of $S$ together with the marked vertices form a rooted
    subtree $T^\star$ of~$T$: on the path from any vertex of $S$ to the root,
    every vertex of $I$ that appears has a child in $S$,
    and is therefore marked.
    All vertices of this subtree receive the message by round~$L$:
    this is clear for the vertices of~$S$, and each marked vertex receives the message before informing one of its children in~$S$.

    \begin{figure}[!ht]
        \centering
        \begin{tikzpicture}[
            x=1.15cm,
            y=1.1cm,
            vertex/.style={
            circle,
            draw,
            minimum size=6.5mm,
            inner sep=0pt,
            font=\scriptsize
            },
            cvertex/.style={vertex,fill=white},
            bvertex/.style={vertex,fill=black!15},
            other/.style={vertex,draw=black!55,fill=white},
            tree edge/.style={->,semithick},
            stamp/.style={fill=white,inner sep=1pt,font=\scriptsize},
            panel/.style={font=\small\bfseries}
        ]

        \begin{scope}
            \foreach \name/\x/\y/\lab in {s/0/0/s,u1/-1.8/-1/{u_1},u2/-1.2/-2/{u_2},c3/1.2/-2/{u_3},u4/2.4/-2/{u_4},u5/-2.4/-3/{u_5}}
            \node[cvertex] (L\name) at (\x,\y) {$\lab$};

            \foreach \name/\x/\y/\lab in {q1/1.8/-1/{q_1},q2/-2.4/-2/{q_2}}
            \node[bvertex] (L\name) at (\x,\y) {$\lab$};

            \foreach \u/\v/\t in {s/u1/1,s/q1/2,u1/q2/2,u1/u2/3,q1/c3/3,q1/u4/4,q2/u5/3}
            \draw[tree edge] (L\u) -- node[stamp] {$\t$} (L\v);

            \node[panel] at (0,.65) {Template};
        \end{scope}

        \begin{scope}[xshift=7.6cm]
            \foreach \name/\x/\y/\lab in {s/0/0/s,u1/-1.8/-1/{u_1},u2/-1.2/-2/{u_2},c3/1.2/-2/{u_3},u4/2.4/-2/{u_4},c5/-2.4/-3/{u_5}}
            \node[cvertex] (R\name) at (\x,\y) {$\lab$};

            \foreach \name/\x/\y/\lab in {q1/1.8/-1/{i_1},q2/-2.4/-2/{i_2}}
            \node[bvertex] (R\name) at (\x,\y) {$\lab$};

            \foreach \name/\x/\y in {x0/0/-1,x1/-1.6/-3,x2/-.8/-3,x3/.8/-3,x4/1.6/-3,x5/-2.8/-4,x6/-2.0/-4}
            \node[other] (R\name) at (\x,\y) {};

            \foreach \u/\v/\t in {s/u1/1,s/q1/2,u1/q2/2,u1/u2/3,q1/c3/3,s/x0/3,q1/u4/4,q2/c5/3,u2/x1/4,u2/x2/5,c3/x3/4,c3/x4/5,c5/x5/4,c5/x6/5}
            \draw[tree edge] (R\u) -- node[stamp] {$\t$} (R\v);

            \node[panel] at (0,.65) {Broadcast tree};
        \end{scope}
        \end{tikzpicture}
        \caption{Left: a vertex-cover template with
            $S = \{s,u_1,u_2,u_3,u_4,u_5\}$ and marked slots
            $Q = \{q_1,q_2\}$.
            Right: a broadcast tree extending the template, where
            $q_1,q_2$ are realized by $i_1,i_2\in I$, and the unlabeled leaves are
            unmarked vertices of~$I$.}
        \label{fig:template}
    \end{figure}

    We enumerate all possible \emph{templates} for such a subtree;
    see \cref{fig:template} for an illustration.
    For each $w \in \{0,\ldots,k-1\}$, we introduce a set $Q$ of $w$
    \emph{marked slots}.
    A marked slot is a~placeholder for a marked vertex of $I$ in the unknown
    broadcast tree; it will later be realized by a distinct vertex of~$I$.
    We enumerate all trees $H$ on vertex set $S \cup Q$, rooted at~$s$,
    such that every marked slot of $Q$ has its parent in $S$ and a non-empty set of children
    contained in~$S$.

    We orient $H$ away from~$s$ and enumerate a label $\lambda(e) \in [L]$ for every arc~$e$ of~$H$.
    The resulting labeled tree, with named vertices~$S$ and slots~$Q$, is a template $\mathcal{T}=(H,s,S,Q,\lambda)$ in the sense of \cref{def:template}.
    We discard the guess unless the consistency conditions of \cref{sec:templates} hold,
    with the additional constraint that for every $q \in Q$, all its outgoing arcs receive a distinct label.

    \begin{claim}
        The number of templates considered is $k^{\bO(k)}$.
    \end{claim}

    \begin{claimproof}
        Fix $w \in \{0,\ldots,k-1\}$ and let $m=k+w\le 2k$.
        There are at most $m^m\le (2k)^{2k}$ trees on the vertex set $S\cup Q$,
        and the root~$s$ determines their orientation.
        For each such tree, there are at most $L^{m-1}\le (2k)^{2k}$ choices of the labels,
        because $L=2k-1\le 2k$.
        The value of $w$ has only $k$ choices, and the consistency checks only discard guesses.
        Hence the total number of templates is at most $k\cdot (2k)^{4k}=k^{\bO(k)}$.
    \end{claimproof}

    Fix a~consistent template~$(H,\lambda)$.
    For notational convenience, write $\tau=\tau_\lambda$.
    For $q \in Q$, let $R(q)$ be the set of vertices of~$S$ adjacent to~$q$ in
    the underlying tree $H$.
    Thus $R(q)$ consists of the parent and the children of $q$ in the template.
    An actual vertex of $I$ can realize $q$ only if it is adjacent in $G$ to every vertex of $R(q)$.
    Let $I_q := \setdef{v \in I}{R(q) \subseteq N_G(v)}$ denote the set of such vertices.
    \Cref{fig:allowed-instantiation} illustrates in particular that,
    with the notation of the figures, $i_1 \in I_{q_1}$.

    \begin{figure}[!ht]
    \centering
    \begin{tikzpicture}[
        x=1cm,
        y=1cm,
        vertex/.style={circle,draw,minimum size=6.5mm,inner sep=0pt,font=\scriptsize},
        cvertex/.style={vertex,fill=white},
        cvertexb/.style={vertex,fill=blue!30},
        ivertex/.style={vertex,fill=white},
        marked/.style={vertex,fill=black!15},
        edge/.style={semithick},
        extra/.style={thin,draw=black!55},
        rbedge/.style={semithick,draw=black},
        setbox/.style={draw,rounded corners,inner sep=5pt},
        setlabel/.style={font=\scriptsize,fill=white,inner sep=1pt}
        ]

        \foreach \name/\x/\lab in {c1/-3/{u_1},c2/-1/{u_2},c5/5/{u_5}}
        \node[cvertex] (\name) at (\x,0) {$\lab$};

        \foreach \name/\x/\lab in {s/-5/s,c3/1/{u_3},c4/3/{u_4}}
        \node[cvertexb] (\name) at (\x,0) {$\lab$};

        \foreach \name/\x in {u0/-5,u1/-3.75,u2/-1.25,u3/1.25,u4/2.5,u5/3.75,u6/5}
        \node[ivertex] (\name) at (\x,-2.4) {};

        \foreach \name/\x/\lab in {i2/-2.5/{i_2},i1/0/{i_1}}
        \node[marked] (\name) at (\x,-2.4) {$\lab$};

        \begin{scope}
        \node[setbox,fit=(s)(c1)(c2)(c3)(c4)(c5)] {};
        \node[setbox,fit=(u0)(i2)(u1)(u2)(i1)(u3)(u4)(u5)(u6)] {};
        \end{scope}

        \node at (0,0) {$S$} ;
        \node at (0,-3.2) {$I$} ;

        \foreach \u/\v in {s/c1,c1/c2,c3/c4,i2/c5,s/u0,c1/i2,c2/u1,c2/u2,c3/u3,c3/u4,c5/u5,c5/u6,c1/u1,c3/u2,c4/u4,c4/u5,c5/u4,c2/i2}{
        \draw[extra] (\u) -- (\v);}

        \draw[extra] (u1) to [bend left=4] (c5) ;

        \foreach \u/\v in {s/i1,i1/c3,i1/c4,i1/c2}{
        \draw[rbedge] (\u) -- (\v);}

    \end{tikzpicture}
    \caption{A graph admitting the right-hand side of~\cref{fig:template} as a~broadcast tree, with vertex cover $S$ and independent set $I$.
        Vertex $i_1 \in I$ realizes the marked slot $q_1 \in Q$, which is permitted since $R(q_1)=\{s,u_3,u_4\}$ (in blue) is a~subset of $N(i_1)=\{s,u_2,u_3,u_4\}$ (with incident thicker edges).}
    \label{fig:allowed-instantiation}
    \end{figure}

    Recall that in {\bMatching}, given an edge-weighted graph and a~map $\beta$ from its vertex set to the set of positive integers,
    we seek a~maximum-weight set $M$ of edges such that each vertex $u$ is incident with at most $\beta(u)$ edges of $M$ (regardless of their weight).
    We call $\beta(u)$ the capacity of~$u$.
    We create such an instance as follows.
    Its vertex set comprises $S \cup I$, and one additional vertex $w_q$ for every marked slot $q \in Q$.
    The capacities are
    \[
        \beta(v)=1 \quad\text{for every } v \in I,
        \qquad
        \beta(w_q)=1 \quad\text{for every } q \in Q,
    \]
    and, for every $u \in S$,
    \[
        \children(u)=\setdef{x \in V(H)}{x~\text{is a child of}~u~\text{in}~H},
        \qquad
        \beta(u)=t-\tau(u)-|\children(u)|.
    \]
    If $\beta(u)<0$ for some $u \in S$, we discard the template.
    If $\beta(u)=0$ for some $u \in S$, then we delete the vertex $u$ from the instance (as it has no capacity).
    We add an edge of weight~$2$ between $w_q$ and every vertex of $I_q$.
    We also add an edge of weight~$1$ between $u \in S$ and $v \in I$ whenever $uv \in E(G)$.
    We say that the template is \emph{extendable} if the resulting {\bMatching} instance has a
    feasible $b$-matching of weight at least $|I|+|Q|$.
    This can be checked in polynomial time, since {\bMatching} is polynomial-time solvable~\cite{book/Schrijver03}.

    \begin{figure}[!ht]
        \centering
        \begin{tikzpicture}[
            x=1cm,
            y=1cm,
            vertex/.style={circle,draw,minimum size=6.5mm,inner sep=0pt,font=\scriptsize},
            cvertex/.style={vertex,fill=white},
            ivertex/.style={vertex,fill=white},
            wvertex/.style={vertex,fill=white},
            edge/.style={thin,draw=black!55},
            chosen/.style={very thick,draw=black},
            setbox/.style={draw,rounded corners,inner sep=5pt},
            caplabel/.style={font=\small,text=red!70!black,inner sep=1pt},
            wlabel/.style={font=\small,fill=white,inner sep=1.5pt}
        ]

        \foreach \name/\x/\lab in {s/-5/s,c1/-3/{u_1},c2/-1/{u_2},c3/1/{u_3},c4/3/{u_4},c5/5/{u_5}}
            \node[cvertex] (\name) at (\x,0) {$\lab$};

        \foreach \name/\x in {u0/-5,u1/-3.75,u2/-1.25,u3/1.25,u4/2.5,u5/3.75,u6/5,v2/-2.5,v1/0}
            \node[ivertex] (\name) at (\x,-2.4) {};

        \foreach \name/\x/\lab in {w2/-1.25/{w_{q_2}},w1/1.25/{w_{q_1}}}
            \node[wvertex] (\name) at (\x,-4.4) {$\lab$};

        \begin{scope}
            \node[setbox,fit=(s)(c1)(c2)(c3)(c4)(c5)] {};
            \node[setbox,fit=(u0)(u1)(v2)(u2)(v1)(u3)(u4)(u5)(u6)] {};
            \node[setbox,fit=(w2)(w1)] (Wbox) {};
        \end{scope}

        \node at (0,0) {$S$};
        \node at (-0.5,-3.2) {$I$};
        \node at ($(Wbox.south)+(0,-0.45)$) {$W$};

        \node[caplabel,anchor=east] at ([xshift=-6pt,yshift=2pt]s.west) {$3$};
        \node[caplabel,anchor=east] at ([xshift=-3pt]c1.west) {$2$};
        \node[caplabel,anchor=east] at ([xshift=-3pt]c2.west) {$2$};
        \node[caplabel,anchor=east] at ([xshift=-3pt]c3.west) {$2$};
        \node[caplabel,anchor=east] at ([xshift=-3pt]c4.west) {$1$};
        \node[caplabel,anchor=east] at ([xshift=-3pt]c5.west) {$2$};

        \foreach \u/\v in {s/u0,s/v1,c1/v2,c1/u1,c2/u1,c2/u2,c2/v1,c2/v2,c3/v1,c3/u2,c3/u3,c3/u4,c4/v1,c4/u4,c4/u5,c5/v2,c5/u4,c5/u5,c5/u6}
            \draw[edge] (\u) -- (\v);

        \draw[edge] (u1) to[bend left=4] (c5);

        \draw[edge] (w1) -- node[wlabel,pos=.5,left=5pt] {$2$} (v1);
        \draw[edge] (w2) -- node[wlabel,pos=.5,right=5pt] {$2$} (v2);
        \draw[edge] (w2) -- node[wlabel,pos=.5,left=5pt] {$2$} (u1);

        \foreach \u/\v in {w1/v1,w2/v2,s/u0,c2/u1,c2/u2,c3/u3,c3/u4,c5/u5,c5/u6}
            \draw[chosen] (\u) -- (\v);

        \end{tikzpicture}
        \caption{The edge-weighted {\bMatching} instance associated with the template and the graph of~\cref{fig:template,fig:allowed-instantiation}.
        Capacities are in red and edge weights are in black.
        Implicit capacities and weights have value~1.
        The bold edges represent the $b$-matching for the broadcast tree of~\cref{fig:template}.}
        \label{fig:b-matching-reduction}
    \end{figure}

    \begin{claim}
        $b(G,s) \le t$ if and only if at least one consistent template is extendable.
    \end{claim}

    \begin{claimproof}
        Suppose first that $b(G,s) \le t$.
        By the discussion above, there is a protocol in at most $t$ rounds in which every vertex of $S$ receives the message by round $L$.
        Take its template $(H,\lambda)$, with labels given by the reception
        times in the protocol.
        We construct a feasible $b$-matching $M$.
        For every marked slot $q \in Q$, let $v_q \in I$ be the actual vertex realizing it, and put the edge $w_q v_q$ into $M$.
        This edge exists because $v_q \in I_q$.
        Every vertex of $I$ that is not marked is a leaf of the broadcast tree, because $I$ is independent and all its potential children lie in $S$.
        If such a leaf $v \in I$ has parent $u \in S$, put the edge $uv$ into $M$.

        By construction, $M$ contains (at~most) one edge incident to each vertex of $I$ and each vertex $w_q$.
        So their capacity is respected.
        After receiving the message at round $\tau(u)$, the vertex $u \in S$ has $t-\tau(u)$ transmissions available before the deadline.
        The template already accounts for exactly $|\children(u)|$ of them.
        Hence the number of additional leaf children of $u$ in $I$ is at most $\beta(u)$, and the capacity of $u$ is respected.
        Finally, the weight of $M$ is
        \[
            2|Q|+(|I|-|Q|)=|I|+|Q|,
        \]
        so the template is extendable.

        Conversely, suppose that a consistent template is extendable.
        Let $M$ be a feasible \mbox{$b$-matching} of~weight at~least $|I|+|Q|$.
        Since every vertex of $I$ has capacity $1$, each vertex of $I$ is incident with at most one edge of $M$.
        Such an edge has weight $2$ only if it is incident with some $w_q$, and otherwise has weight $1$.
        Let $m_2$ be the number of weight-$2$ edges in $M$.
        Then $m_2 \le |Q|$, because each $w_q$ has capacity $1$.
        The remaining matched vertices of $I$ contribute weight at most $1$ each, and there are at most $|I|-m_2$ of them.
        Therefore every feasible $b$-matching has weight at most
        \[
            2m_2+(|I|-m_2)=|I|+m_2 \le |I|+|Q|.
        \]
        It follows that $M$ has weight exactly $|I|+|Q|$,
        and equality in the above upper bound implies that $m_2=|Q|$, every vertex of $I$ is saturated, and every vertex $w_q$ is saturated.

        We construct a broadcast protocol as follows.
        Start with the guessed template.
        For each edge $w_q v \in M$, replace $q \in Q$ by the actual vertex $v$.
        This is possible because $v \in I_q$, so $v$ is adjacent in $G$ to the parent and all children prescribed for $q$ in the template.
        The capacity-$1$ constraints on the vertices of $I$ ensure that these vertices are distinct and are not used again.
        Every remaining vertex of $I$ is saturated by a unique edge $uv \in M$ with $u \in S$;
        make each such vertex $v$ a leaf child of $u$.
        This gives a~spanning tree of $G$ rooted at $s$.

        We schedule every template arc from a parent to a vertex $x$ at round $\tau(x)$.
        For each $u \in S$, the number of added leaf children assigned to $u$ is at most $\beta(u)$,
        and there are exactly $\beta(u)$ rounds among $\tau(u)+1,\ldots,t$ not already occupied by template transmissions from $u$.
        Schedule these leaf transmissions arbitrarily in those free rounds.
        All transmissions are along edges of $G$, no vertex transmits before receiving the message, and no vertex transmits to two vertices in the same round.
        Therefore every vertex receives the message by round $t$.
    \end{claimproof}

    We enumerate $k^{\bO(k)}$ templates, and for each of them solve a polynomial-size {\bMatching} instance.
    Since $k\le \vc+1$, the total running time is $\vc^{\bO(\vc)} n^{\bO(1)}$.
\end{proof}

\section{Vertex Integrity}

In this section we consider the parameterization by \emph{vertex integrity}, denoted by $\vi$.
The vertex integrity of a graph is the smallest value~$k$ such that there exists a vertex
set $S$ whose deletion leaves only connected components of size at most $k-|S|$.
Vertex integrity generalizes vertex cover number in a straightforward way,
since it is upper-bounded by the vertex cover number plus one.

Egami et al.~\cite{mfcs/EgamiGHKLMNOV025} proved that {\TB} is
fixed-parameter tractable parameterized by vertex integrity.
The dependence on the parameter in their proof is double-exponential;%
\footnote{Their algorithm considers two cases.  In the first one, the stated
double-exponential dependence can already be improved to
$2^{\bO(\vi^3)}n^{\bO(1)}$ by using their Theorem~12 as a subroutine.  The
second case, however, still takes time $2^{\bO(\vi^5)}n^{\bO(1)}$.}
we improve it to $\vi^{\bO(\vi^2)}n^{\bO(1)}$.

At a high level, the algorithm guesses only the part of the protocol where a
component of $G-S$ interacts nontrivially with the modulator~$S$.
There are few components that send the message back to~$S$, and these components,
together with the vertices of~$S$, can be assumed to receive the message
only at the very start or at the very end of the protocol~\cite[Lemma 8]{mfcs/EgamiGHKLMNOV025}.
Almost every other component receives the message from~$S$ only once;
the only exceptions occur near the end of the protocol~\cite[Lemma 7]{mfcs/EgamiGHKLMNOV025}.
The template records exactly these bounded interactions,
while all remaining components are assigned later by a {\MaxMatching} instance
(equivalently, an edge-weighted {\bMatching} instance with unit capacities).

\begin{theorem}\label{thm:algo:vi}
    {\TB} can be solved in time $\vi^{\bO(\vi^2)} n^{\bO(1)}$,
    where $\vi$ is the vertex integrity of the input graph.
\end{theorem}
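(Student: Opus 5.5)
The plan follows the vertex-cover proof: guess a bounded template, then decide extendability with a matching instance.

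\emph{Setup.} Compute a modulator $S$ with $s\in S$ and $|S|+\max_{C\in\cc(G-S)}|C|\le \vi+1$. This takes $\vi^{\bO(\vi)}n^{\bO(1)}$ time by branching on connected sets of size at most $\vi$. Set $k:=\vi+1$.

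\emph{Structural normalization.} We use the two lemmas of Egami et al.~\cite{mfcs/EgamiGHKLMNOV025} quoted above. By \cite[Lemma 8]{mfcs/EgamiGHKLMNOV025}, we may assume the following for some optimal protocol. The vertices of $S$, and the components that send the message back to $S$, are informed within an initial window of $\bO(k^2)$ rounds, or else only in a final window of $\bO(k)$ rounds before $t$. We call the components sending the message back \emph{returning}. Each returning component informs at least one vertex of $S\setminus\{s\}$, so there are at most $k-1$ of them. By \cite[Lemma 7]{mfcs/EgamiGHKLMNOV025}, every other component is entered from $S$ exactly once, except possibly for $\bO(k)$ components entered several times, with all those entries near the end. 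Call these \emph{multi-entry}.

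\emph{Templates and counting.} A template has named set $S$ and at most $\bO(k)$ slots, one per returning or multi-entry component. Since each component has at most $k$ vertices, every slot is entered at most $k$ times. The template is a directed multigraph on $S\cup Q$. It records the tree arcs inside $S$, all boundary arcs between $S$ and slots, and for each slot the $S$-vertices it informs. Each arc carries a round label, either in the initial window $[1,\bO(k^2)]$ or written as $t-j$ with $j=\bO(k)$. The template has $\bO(k^2)$ arcs, and each arc needs an endpoint choice ($\bO(k)$ options) and a label ($\bO(k^2)$ options). This gives $k^{\bO(k^2)}$ templates in total, which is the claimed bound.

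We also guess the \emph{type} of each slot: an isomorphism class of a component of size at most $k$, together with the attachment pattern to $S$ and which internal vertices carry which boundary arcs. There are $2^{\bO(k^2)}$ types. A component $C$ can realize a slot $q$ if it has the guessed type and admits a local broadcast schedule consistent with the entry and exit times of $q$. This local check is brute force over a bounded object, so it costs $f(k)$ per component.

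\emph{Matching.} All remaining components are entered exactly once, from some $u\in S$ at some round $\rho$, and afterwards broadcast internally. Whether component $C$ finishes by round $t$ depends only on the pair $(v,t-\rho)$, where $v$ is the entry vertex. So for each $u\in S$ we split the free rounds of $u$ (those after $\tau(u)$ not used by template arcs) into unit-capacity round-vertices $(u,\rho)$. We join $C$ to $(u,\rho)$ whenever some $v\in C\cap N(u)$ lets $C$ complete its internal broadcast within $t-\rho$ rounds. We also join $C$ to each slot vertex $w_q$ it can realize, with weight $2$, and use weight $1$ for ordinary edges. This gives a maximum-weight matching instance. By the same weight argument as in the vertex-cover proof, a matching of weight $|\cc(G-S)|+|Q|$ exists if and only if the template is extendable. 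Soundness and completeness then follow as in that proof.

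The main obstacle is to justify that the normalization lets every returning or multi-entry component be captured with labels from only $\bO(k^2)$ values. This is what keeps the count at $k^{\bO(k^2)}$ rather than allowing arbitrary labels up to $t$. I would first try to rederive this from the exchange arguments in \cite[Lemmas 7, 8]{mfcs/EgamiGHKLMNOV025}. A fallback is to express the late labels relative to $t$, which works because entries near the end lie in a window of length $\bO(k)$.
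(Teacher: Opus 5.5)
Your architecture is the paper's. You use slots for components that send back to $S$ (the paper's relay components) and for leaf components entered at least twice (its exceptional components), and labels confined to the windows of Lemmas~7 and~8 of Egami et al. You precompute the values $b(D,v)$ and use a maximum-weight matching with weight-$2$ edges $v_Dw_q$ and weight-$1$ edges from $v_D$ to free rounds $x_{u,\rho}$ with $\rho+b(D,v)\le t$. That part, and the weight argument, are exactly as in the paper. The problem is the counting, which is the real content of the theorem; as written it does not give $\vi^{\bO(\vi^2)}$.

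First, guessing a \emph{type} for every slot breaks the bound. There are $2^{\Theta(k^2)}$ types (the isomorphism classes of $k$-vertex graphs alone already give this many). They are chosen independently for each of up to $|S|-1$ relay slots, which can be $\Theta(k)$. So the number of guesses is multiplied by $2^{\Theta(k^3)}$, which is not $k^{\bO(k^2)}$. Types are unnecessary. The paper simply declares, for each pair $(D,q)$, that $D$ realizes $q$ if the arcs at $q$ can be implemented by choosing endpoints in $D$ together with an internal schedule. It tests this by brute force in $k^{\bO(k)}$ time and lets the matching make the choice. To keep that brute force bounded, require relay components to be informed entirely in rounds of $\Lambda$, which Lemma~8 permits. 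Multi-entry components are automatically informed in the last $k-1$ rounds.

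Second, your bound of $\bO(k)$ multi-entry components is not justified. Lemma~7 only says that all entries into such a component occur in the last $k-1$ rounds. The count has to come from a capacity argument: in those rounds the vertices of $S$ make at most $|S|(k-1)$ transmissions, so there are at most $|S|(k-1)/2=\Theta(k^2)$ such components. With that many slots, your estimate ``each slot is entered at most $k$ times'' yields only $\bO(k^3)$ arcs, hence $k^{\bO(k^3)}$ templates.

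The missing step is to bound arcs globally rather than per slot:
\begin{itemize}
\item arcs into multi-entry slots number at most $|S|\,|J|\le|S|(k-1)$, by the same capacity argument;
\item arcs into relay slots number at most $(|S|-1)d$;
\item arcs out of relay slots, and arcs inside $S$, number at most $|S|-1$ each.
\end{itemize}
So a template has $\bO(k^2)$ arcs and $\bO(k^2)$ vertices, and there are $k^{\bO(k^2)}$ templates.

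Finally, Lemma~8 places the late window before $\ell=b(G,s)$, not before $t$. The paper shifts the last $2k$ rounds of the optimal protocol to $[t-2k+1,t]$; delaying a suffix preserves feasibility. Alternatively, you could run the test for every $t'\le t$. With these repairs your argument coincides with the paper's.
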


\begin{proof}
    Let $(G,s,t)$ be an instance of {\TB}, where $G=(V,E)$ is connected and $|V|=n$.
    Assume that $t<n$, since otherwise the answer is trivially yes.
    Using the algorithm of Drange et al.~\cite{algorithmica/DrangeDH16},
    we compute in time $\vi^{\bO(\vi)} n^{\bO(1)}$ a set $S_0 \subseteq V$ such
    that every component of $G-(S_0 \cup \{s\})$ has at most $\vi - |S_0|$ vertices.
    Let $S := S_0 \cup \{s\}$, $\mathcal{D} := \cc(G-S)$,
    $d := \max_{D\in\mathcal{D}}|V(D)|$, and
    $k := |S|+d \le \vi+1$.

    \proofsubparagraph{Notation.}
    For a protocol with broadcast tree~$T$, an \emph{entry} into a component
    $D\in\mathcal{D}$ is an edge of~$T$ directed from~$S$ to~$D$.  A component is a
    \emph{relay component} if some edge of~$T$ is directed from it to~$S$, and
    otherwise it is a \emph{leaf component}.  A leaf component is
    \emph{exceptional} if it has at least two entries from~$S$.  Relay and
    exceptional components are called \emph{special}.

    Let $\Lambda := [1,\min\{t,(2k-3)k\}] \cup [\max\{1,t-2k+1\},t]$
    and $J := [\max\{1,t-k+2\},t]$.
    Consequently, we have that $|\Lambda|=\bO(k^2)$ and $|J|\le k-1$.
    Using the results of~\cite{mfcs/EgamiGHKLMNOV025} we obtain the following.

    \begin{claim}\label{clm:vi:canonical-protocol}
        If $b(G,s)\le t$, then there is a broadcast protocol~$P$ of length at
        most~$t$, with broadcast tree~$T$, such that the following hold.
        \begin{enumerate}
            \item Every vertex of~$S\setminus\{s\}$ and every vertex in a
            relay component of~$P$ is informed in a round of~$\Lambda$.
            \item Every exceptional component of~$P$ has all its entries in
            rounds of~$J$.
        \end{enumerate}
    \end{claim}

    \begin{claimproof}
        Let $\ell=b(G,s)\le t$.
        Following the notation of~\cite{mfcs/EgamiGHKLMNOV025}, we say that a
        protocol is \emph{$S$-lazy} if, among all optimal broadcast protocols, it
        minimizes the total number of message forwardings from~$S$.
        By~\cite[Lemma 8]{mfcs/EgamiGHKLMNOV025}, there is an $S$-lazy optimal
        protocol~$P_0$ of length~$\ell$ in which every vertex of
        $S\setminus\{s\}$ and every vertex in a relay component is informed
        either during the first $(2k-3)k$ rounds or during the last $2k$ rounds.
        Moreover, \cite[Lemma 7]{mfcs/EgamiGHKLMNOV025} implies that every leaf component
        of~$P_0$ with at least two entries from~$S$ has all its entries after
        round $\ell-k+1$.

        To obtain $P$, we shift the time-stamps of the last $2k$ rounds of~$P_0$ to
        occupy rounds $[t-2k+1, t]$ instead, leaving all earlier time-stamps unchanged.
        This only delays a~suffix of the protocol.
        Hence feasibility is preserved: earlier and shifted time-stamps remain disjoint,
        and every transmission delayed in this way still occurs after its sender is informed.
        The resulting protocol has length at most~$t$.

        The first $(2k-3)k$ rounds of~$P_0$ either remain unchanged or are
        shifted into the last $2k$ rounds with respect to~$t$, while the last
        $2k$ rounds of~$P_0$ are shifted into the last $2k$ rounds with respect
        to~$t$.  This proves the first item.  Since the last $k-1$ rounds of
        $P_0$ are contained in the last~$2k$ ones, every entry into a leaf component with at
        least two entries is shifted into~$J$ as well, proving the second item.
    \end{claimproof}

    \begin{claim}\label{clm:special-comp}
        Let $P$ be a protocol with broadcast tree~$T$ satisfying the two
        properties of \cref{clm:vi:canonical-protocol}.  Then $P$ has at most
        $|S|-1$ relay components and at most $|S|(k-1)/2$ exceptional components.
    \end{claim}

    \begin{claimproof}
        Each relay component informs at least one vertex of~$S\setminus\{s\}$,
        and distinct relay components inform distinct such vertices, since
        every vertex has a unique parent in~$T$.  Hence there are at most
        $|S|-1$ relay components.

        By the assumed second property, every entry into an exceptional
        component occurs in a round of~$J$.  Since $T$ is a tree, these entries
        inform distinct vertices of the component.  Each exceptional component
        has at least two entries, while the vertices of~$S$ make at most
        $|S|\abs{J}\le |S|(k-1)$ transmissions in these rounds.
        Thus there are at most $|S|(k-1)/2$ exceptional components.
    \end{claimproof}

    Thus only $\bO(k^2)$ special components need to be represented explicitly in a broadcast protocol satisfying the two properties of \cref{clm:vi:canonical-protocol}.
    Every other component is a leaf component entered exactly once from~$S$.

    \proofsubparagraph{Templates.}
    We now describe the bounded object that we guess, in the same spirit as
    the template in the proof of \cref{thm:algo:vc}.  The vertices of~$S$ are
    kept explicitly.  Each special component is contracted to one formal
    vertex, called a \emph{slot}.  A slot is a \emph{relay slot} or an
    \emph{exceptional slot}, according to the type of the component that it
    represents.

    Formally, this is a template $\mathcal{T}=(H,s,S,Q,\lambda)$ in the sense of
    \cref{def:template}, where $H=(S\cup Q,A)$.  We write
    $Q_{\mathrm{rel}}\subseteq Q$ for the relay slots and
    $Q_{\mathrm{exc}}=Q\setminus Q_{\mathrm{rel}}$ for the exceptional slots.
    The arcs in~$A$ record the transmissions involving~$S$ and the special
    components, and are only of the following forms, where we additionally record
    the set of allowed labels for each type of arc:
    \[
        S \xrightarrow{\Lambda} S,\qquad
        S \xrightarrow{\Lambda} Q_{\mathrm{rel}},\qquad
        Q_{\mathrm{rel}} \xrightarrow{\Lambda} S,\qquad
        S \xrightarrow{J} Q_{\mathrm{exc}}.
    \]
    An arc incident with a slot records only the endpoint in~$S$ and the round
    of the transmission; the endpoint inside the represented component is
    chosen later, when the slot is realized.

    Unlike the vertex-cover template, this guessed object need not be a tree.
    A special component may receive several entries from~$S$, and a relay
    component may also send the message back to~$S$.
    \Cref{fig:vi-template} illustrates the guessed object and how it is
    realized by actual components of~$G-S$.

    \begin{figure}[!ht]
        \centering
        \begin{tikzpicture}[
            x=.8cm,
            y=1cm,
            named/.style={circle,draw,fill=white,minimum size=6.5mm,inner sep=0pt,font=\scriptsize},
            slot/.style={rounded corners=3pt,draw,fill=black!15,minimum width=12mm,minimum height=7mm,inner sep=1pt,font=\scriptsize},
            cvertex/.style={circle,draw,fill=white,minimum size=5.8mm,inner sep=0pt,font=\scriptsize},
            compbox/.style={draw,rounded corners=3pt,fill=black!4,inner sep=5pt},
            tree edge/.style={->,semithick},
            internal/.style={->,thin,draw=black!65},
            graph edge/.style={thin,densely dashed,draw=black!55},
            stamp/.style={fill=white,inner sep=1pt,font=\scriptsize},
            panel/.style={font=\small\bfseries},
            complabel/.style={font=\scriptsize}
        ]

        \begin{scope}
            \foreach \name/\x/\y/\lab in {s/0/0/s,u1/-1.45/-1/{u_1},u2/1.45/-2.2/{u_2},u3/-1.45/-3.05/{u_3}}
                \node[named] (L\name) at (\x,\y) {$\lab$};

            \node[slot] (Lq1) at (1.45,-1) {$q_{\mathrm{rel}}$};
            \node[slot] (Lq2) at (-3.55,-2.05) {$q_{\mathrm{exc}}$};

            \draw[tree edge] (Ls) -- node[stamp] {$1$} (Lu1);
            \draw[tree edge] (Lu1) -- node[stamp] {$3$} (Lq1);
            \draw[tree edge] (Lq1) -- node[stamp] {$5$} (Lu2);
            \draw[tree edge] (Lu2) -- node[stamp] {$6$} (Lu3);
            \draw[tree edge] (Lu1) -- node[stamp] {$t-2$} (Lq2);
            \draw[tree edge] (Lu3) -- node[stamp] {$t-1$} (Lq2);

            \node[panel] at (0,.65) {Template $\mathcal T$};
        \end{scope}

        \begin{scope}[xshift=6.3cm]
            \foreach \name/\x/\y/\lab in {s/0/0/s,u1/-1.35/-1/{u_1},u2/1.15/-2.1/{u_2},u3/-1.15/-3.05/{u_3}}
                \node[named] (R\name) at (\x,\y) {$\lab$};

            \foreach \name/\x/\y/\lab in {a/.85/-1/{a},b/2.25/-1/{b},c/-2.85/-2.25/{c},d/-2.85/-4.35/{d},e/-4.35/-3.2/{e},f/2.55/-3/{f},g/3.9/-3/{g},h/1.2/-4.45/{h},i/2.55/-4.45/{i}}
                \node[cvertex] (R\name) at (\x,\y) {$\lab$};

            \begin{scope}[on background layer]
                \node[compbox,fit=(Ra)(Rb)] (RDrel) {};
                \node[compbox,fit=(Rc)(Rd)(Re)] (RDexc) {};
                \node[compbox,fit=(Rf)(Rg)] (RDleaf) {};
                \node[compbox,fit=(Rh)(Ri)] (RDleafb) {};
            \end{scope}
            \node[complabel] at ($(RDrel.east)+(.3,0)$) {$D_1$};
            \node[complabel] at ($(RDexc.south)+(0,-.25)$) {$D_2$};
            \node[complabel] at ($(RDleaf.east)+(0.3,0)$) {$D_3$};
            \node[complabel] at ($(RDleafb.south)+(0,-.25)$) {$D_4$};

            \draw[tree edge] (Rs) -- node[stamp] {$1$} (Ru1);
            \draw[tree edge] (Ru1) -- node[stamp] {$3$} (Ra);
            \draw[internal] (Ra) -- node[stamp] {$4$} (Rb);
            \draw[tree edge] (Rb) -- node[stamp,pos=.65] {$5$} (Ru2);
            \draw[tree edge] (Ru2) -- node[stamp] {$6$} (Ru3);
            \draw[tree edge] (Ru1) -- node[stamp] {$t-2$} (Rc);
            \draw[tree edge] (Ru3) -- node[stamp,pos=.3,below] {$t-1$} (Rd);
            \draw[internal] (Rc) -- node[stamp] {$t-1$} (Re);
            \draw[graph edge] (Rd) -- (Re);
            \draw[tree edge] (Ru2) -- node[stamp] {$7$} (Rf);
            \draw[internal] (Rf) -- node[stamp] {$8$} (Rg);
            \draw[tree edge] (Ru3) -- node[stamp] {$7$} (Rh);
            \draw[internal] (Rh) -- node[stamp] {$8$} (Ri);

            \node[panel] at (.35,.65) {Realization and extension};
        \end{scope}
        \end{tikzpicture}
        \caption{Left: a vertex-integrity template with
            $S = \{s,u_1,u_2,u_3\}$ and slots $Q = \{q_{\mathrm{rel}},q_{\mathrm{exc}}\}$.
            Right: a broadcast tree extending the template, where
            $q_{\mathrm{rel}},q_{\mathrm{exc}}$ are realized by $D_1,D_2 \in \mathcal{D}$,
            and $D_3,D_4 \in \mathcal{D}$ are leaf components of the protocol.}
        \label{fig:vi-template}
    \end{figure}

    We enumerate templates as follows.
    We use at most $|S|-1$ relay slots and at most $|S|(k-1)/2$ exceptional slots,
    as guaranteed by \cref{clm:special-comp}.
    For these slots we enumerate arcs of the allowed forms above, with labels from the relevant sets of rounds.
    The next claim shows that it is enough to consider guesses with $\bO(k^2)$ arcs.
    Here we call a (vertex-integrity) template \emph{consistent} if it satisfies the common
    consistency checks of \cref{sec:templates} and the following additional conditions:
    \begin{itemize}
        \item every relay slot has at least one incoming arc from~$S$ and at
        least one outgoing arc to~$S$;
        \item every exceptional slot has at least two incoming arcs from~$S$;
        \item no slot has more than $d$ incoming arcs.\footnote{Recall that $d$ is the largest number of vertices among the connected components of $G-S$.}
    \end{itemize}

    \begin{claim}
        The number of templates considered is $k^{\bO(k^2)}$.
    \end{claim}
    \begin{claimproof}
        The number of slots is $\bO(k^2)$, and hence every template has
        $\bO(k^2)$ vertices.  In a~consistent template, relay slots contribute
        at most $(|S|-1)d$ incoming arcs from~$S$ and at most $|S|-1$
        outgoing arcs to~$S$.  Exceptional slots contribute at most
        $|S|\abs{J}\le |S|(k-1)$ incoming arcs, because each vertex of
        $S$ transmits at most once in each round of~$J$.  Finally, there are at
        most $|S|-1$ arcs with both endpoints in~$S$.  Thus the number of
        arcs is $\bO(k^2)$.

        For each arc we have $\bO(k^2)$ choices for its tail, $\bO(k^2)$
        choices for its head, and $\bO(k^2)$ choices for its label.
        Therefore the total number of guesses is $k^{\bO(k^2)}$,
        while the consistency checks only discard guesses.
    \end{claimproof}

    Fix a consistent template.
    A component $D\in\mathcal{D}$ \emph{realizes} a
    slot $q\in Q$ if each arc incident with~$q$ can be implemented, with its
    prescribed direction and time-stamp, by choosing an endpoint in~$D$ adjacent
    to the corresponding vertex of~$S$, and by adding transmissions inside~$D$,
    so that every vertex of~$D$ receives the message exactly once by
    round~$t$, and every vertex of~$D$ transmits only after it is informed and
    at most once per round.  For a relay slot we additionally require that all
    vertices of~$D$ receive in rounds of~$\Lambda$.  No further restriction is
    needed for an exceptional slot: all its entries occur in~$J$, so every
    valid realization completing by round~$t$ receives all vertices of~$D$ in
    rounds of~$J$.
    Since a slot has $\bO(k)$ incident arcs and $|V(D)|\le d\le k$,
    whether $D$ realizes it can be tested in $k^{\bO(k)}$ time by exhaustively
    guessing the corresponding local protocol.

    For every component $D \in \mathcal{D}$ and every $v \in V(D)$, compute $b(D,v)$.
    Since $|V(D)|\le d \le k$, all these values can be computed in total time
    $2^{\bO(k)} n^{\bO(1)}$ by the $3^n n^{\bO(1)}$-time algorithm of Fomin et al.~\cite{tcs/FominFG24}.

    \proofsubparagraph{Extending a template.}

    Fix a consistent template~$(H,\lambda)$, and, for notational convenience, write $\tau=\tau_\lambda$.
    Let $Q$ be its set of slots.
    The components assigned to its slots account for every interaction with $S$
    that is not a single entry into a leaf component.
    Thus, as for the unmarked independent-set vertices in the proof of
    \cref{thm:algo:vc}, every remaining component only has to be assigned one
    unused transmission from~$S$.

    We construct a {\MaxMatching} instance.
    Its vertex set contains one vertex $v_D$ for every component $D \in \mathcal{D}$ and one vertex $w_q$ for every slot $q \in Q$.
    We add an edge of weight $2$ between $v_D$ and $w_q$ exactly when $D$ realizes $q$.
    For every $u \in S$ and every round $\rho \in \{\tau(u)+1,\dots,t\}$
    in which no arc of~$H$ leaves $u$ with label~$\rho$, create a vertex $x_{u,\rho}$.
    We add an edge of weight~$1$ between $v_D$ and $x_{u,\rho}$ if there is a~vertex
    $v \in N_G(u) \cap V(D)$ such that $\rho+b(D,v) \le t$.
    The template is \emph{extendable} if this instance has a~matching of
    weight at least
    \[
        |\mathcal{D}|+|Q|.
    \]
    The weight-$2$ edges pay for covering both a component and a slot, while
    the weight-$1$ edges pay only for covering a component by an ordinary one-entry broadcast from~$S$.
    The instance has polynomial size, since $t<n$, and
    {\MaxMatching} is polynomial-time solvable~\cite[Section~11]{books/KorteV18}.

    \begin{figure}[!ht]
        \centering
        \begin{tikzpicture}[
            x=1cm,
            y=1cm,
            vertex/.style={circle,draw,minimum size=7mm,inner sep=0pt,font=\scriptsize},
            xvertex/.style={vertex,fill=white},
            dvertex/.style={vertex,fill=white},
            wvertex/.style={vertex,fill=white},
            edge/.style={thin,draw=black!55},
            chosen/.style={very thick,draw=black},
            setbox/.style={draw,rounded corners=3pt,inner sep=5pt},
            setlabel/.style={font=\small},
            wlabel/.style={font=\small,fill=white,inner sep=1.5pt}
        ]

        \foreach \name/\x/\lab in {x1/-1.4/{x_{u_2,7}},x2/1.4/{x_{u_3,7}}}
            \node[xvertex] (\name) at (\x,0) {$\lab$};

        \foreach \name/\x/\lab in {d1/-4.2/{v_{D_1}},d2/-1.4/{v_{D_2}},d3/1.4/{v_{D_3}},d4/4.2/{v_{D_4}}}
            \node[dvertex] (\name) at (\x,-2.15) {$\lab$};

        \foreach \name/\x/\lab in {w1/-4.2/{w_{q_{\mathrm{rel}}}},w2/-1.4/{w_{q_{\mathrm{exc}}}}}
            \node[wvertex] (\name) at (\x,-4.25) {$\lab$};

        \begin{scope}
            \node[setbox,fit=(x1)(x2)] (Xbox) {};
            \node[setbox,fit=(d1)(d2)(d3)(d4)] (Dbox) {};
            \node[setbox,fit=(w1)(w2)] (Wbox) {};
        \end{scope}

        \node[setlabel] at ($(Xbox.north)+(0,.35)$) {unused transmissions from $S$};
        \node[setlabel,anchor=east] at ($(Dbox.west)+(-.25,0)$) {$\mathcal{D}$};
        \node[setlabel,anchor=east] at ($(Wbox.west)+(-.25,0)$) {$W$};

        \foreach \u/\v in {w1/d1,w2/d2,w1/d2,x1/d2,x1/d3,x2/d3,x2/d4}
            \draw[edge] (\u) -- (\v);

        \foreach \u/\v in {w1/d1,w2/d2,x1/d3,x2/d4}
            \draw[chosen] (\u) -- (\v);

        \node[wlabel,anchor=west] at ($(w1)!.52!(d1)+(.13,0)$) {$2$};
        \node[wlabel,anchor=west] at ($(w2)!.52!(d2)+(.13,0)$) {$2$};
        \node[wlabel,anchor=west] at ($(w1)!.55!(d2)+(.13,-.05)$) {$2$};

        \end{tikzpicture}
        \caption{Part of a {\MaxMatching} instance associated with the template of~\cref{fig:vi-template}.
        Edge weights are in black; implicit weights have value~1.
        Bold edges highlight a~solution that corresponds to the realization of~\cref{fig:vi-template}.}
        \label{fig:vi-matching}
    \end{figure}

    \begin{claim}
        If $b(G,s)\le t$, then an extendable template exists.
    \end{claim}
    \begin{claimproof}
        Suppose that $b(G,s) \le t$ and take the protocol~$P$ satisfying the two
        properties of \cref{clm:vi:canonical-protocol}.
        Introduce a relay slot for every relay component and an exceptional slot for every exceptional component,
        and form~$H$ by recording precisely the transmissions incident with~$S$ in the
        restriction of~$P$ to these components and~$S$.
        The label ranges follow from \cref{clm:vi:canonical-protocol}, and the
        counting arguments above ensure that this template is enumerated.
        Each special component realizes its corresponding slot.
        Add the corresponding weight-$2$ edge to a~matching~$M$.

        Every component not represented by a slot is a leaf component entered exactly once.
        If such a component~$D$ is entered from~$u \in S$ through~$v \in V(D)$
        at round~$\rho$, then the restriction of~$P$ to~$D$ implies
        $\rho+b(D,v) \le t$.  Add the edge between~$v_D$ and~$x_{u,\rho}$ to~$M$.
        The chosen edges are a matching: each component and each slot
        is used once, and two ordinary edges incident with the same
        $x_{u,\rho}$ would correspond to two transmissions by~$u$ in
        round~$\rho$ in~$P$.  Hence $M$ is feasible and has weight
        $2|Q|+(|\mathcal{D}|-|Q|)=|\mathcal{D}|+|Q|$.
    \end{claimproof}

    \begin{claim}
        If an extendable template exists, then $b(G,s)\le t$.
    \end{claim}
    \begin{claimproof}
        Let a~consistent template and a~matching $M$ of weight at least
        $|\mathcal{D}|+|Q|$ be given.  Let $m_2$ be the number of weight-$2$
        edges in~$M$.  Since each such edge is incident with a~distinct slot,
        $m_2\le |Q|$.  Since $M$ is a~matching, the
        number of weight-$1$ edges is at most $|\mathcal{D}|-m_2$.  Therefore
        \[
            w(M)\le 2m_2+|\mathcal{D}|-m_2
            =|\mathcal{D}|+m_2\le |\mathcal{D}|+|Q|.
        \]
        Equality must hold throughout.  Thus every component is matched, every
        slot is matched by a weight-$2$ edge, and every remaining component is
        matched to a unique vertex $x_{u,\rho}$.

        Realize each slot by its matched component, using a~witnessing local protocol for that realization.
        For every remaining component $D$ matched to $x_{u,\rho}$, choose a neighbor
        $v \in N_G(u)\cap V(D)$ such that $\rho + b(D,v) \le t$, let $u$ inform~$v$ at
        round~$\rho$, and run an optimal broadcast inside $D$ from~$v$.
        The inequality defining the edge to $x_{u,\rho}$ guarantees completion by round~$t$.

        All transmissions are along edges of~$G$: this follows from consistency
        for arcs inside~$S$, from the realizations for slot components, and
        from the definition of the edges to the vertices~$x_{u,\rho}$.  No
        vertex of~$S$ transmits before it is informed or twice in the same
        round: consistency handles the template arcs, while each vertex
        $x_{u,\rho}$ represents a free round of~$u$ and is incident with at
        most one edge of the matching~$M$.
        The local protocols give the same guarantees inside the components, and different components are disjoint.

        Finally, every vertex is informed.  Indeed, every vertex of
        $S\setminus\{s\}$ has the unique incoming transmission prescribed by
        the template, every slot is realized by a matched component, and every
        other component receives one ordinary entry from~$S$.  Following
        incoming transmissions backwards strictly decreases time, and therefore
        reaches the source~$s$.
        Hence this is a~broadcast protocol of length at most~$t$, and
        $b(G,s)\le t$.
    \end{claimproof}

    The algorithm accepts if at least one consistent template is extendable.
    Enumerating the templates and testing all component-slot compatibilities
    takes time $k^{\bO(k^2)}n^{\bO(1)} = \vi^{\bO(\vi^2)}n^{\bO(1)}$,
    and each extension test is polynomial.
\end{proof}

\section{Distance to Clique}\label{sec:dtc}

A set $S \subseteq V(G)$ is a \emph{modulator to a clique} if $G-S$ is a~clique.
The \emph{distance to clique} of a~graph~$G$ is the minimum size of such a~set.
Equivalently, a~modulator to a clique in~$G$ is a vertex cover of the complement~$\overline{G}$.
Thus a~minimum modulator can be computed in time $2^{\bO(k)}n^{\bO(1)}$~\cite{tcs/ChenKX10,stacs/HarrisN24}.
Egami et al.~\cite{mfcs/EgamiGHKLMNOV025} proved that {\TB} is
fixed-parameter tractable parameterized by distance to clique~$k$,
presenting an algorithm of running time $2^{\bO(k^2)}n^{\bO(1)}$.
In this section we improve this dependence to $k^{\bO(k)}n^{\bO(1)}$.

The reason this parameter admits a compact description of a protocol is that the clique side is highly symmetric.
Once one clique vertex is informed, the message can spread inside the clique without using the modulator again.
Thus the only clique vertices that need to be represented explicitly are those that send the message back to the modulator.
After adding the source to the modulator, there are at most $|S|-1$ such vertices.
We represent them by slots in a template, use a {\bMatching}
instance to choose the actual clique vertices realizing these slots,
and then complete the broadcast inside the clique greedily.

We use the following upper bound of Egami et al.~\cite{mfcs/EgamiGHKLMNOV025}
to restrict the length of the optimal protocol.

\begin{lemma}[{\cite[Lemma 10]{mfcs/EgamiGHKLMNOV025}}]\label{lem:dtc:upper-bound}
    Let $G$ be a connected graph, and let $S \subseteq V(G)$ be such
    that $G-S$ is a clique of size $N \ge 1$.
    Then $b(G,s) \le |S| + \ceil{\log_2 N}$ for every source $s \in V(G)$.
\end{lemma}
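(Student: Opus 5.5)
The plan is a direct broadcasting strategy in two phases: first inform all of $S$ within $|S|$ rounds, then let every informed vertex (of $S$ or of the clique) keep doubling inside the clique for $\ceil{\log_2 N}$ rounds. Write $C := V(G)\setminus S$ for the clique. If $S$ is empty the statement is the standard doubling bound on a clique. Otherwise I would use a simple observation: since $G$ is connected and $G-S$ is a clique (hence connected when nonempty), every vertex of $S$ is joined to $C$ by a path whose internal vertices lie in $S$.

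\textbf{First phase.} Let $I \subseteq S\cup C$ denote the set of informed vertices; initially $I=\{s\}$. I would show the invariant that after at most $|S|$ rounds, all of $S$ is informed and at least one vertex of $C$ is informed. The argument goes through the \emph{growth} of $I\cap (S\cup\{c\})$. Suppose first $s\in S$. Consider a BFS tree $T_0$ of $G[S\cup\{c\}]$ rooted at $s$, for a suitable vertex $c\in C$. We need $G[S\cup\{c\}]$ to be connected, and this can fail. So instead I take a spanning tree $T_0$ of $G$ restricted to $S$ plus the whole of $C$ contracted to one vertex $\hat c$. This contracted graph has $|S|+1$ vertices and is connected. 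A tree on $m=|S|+1$ vertices can be broadcast along in at most $m-1=|S|$ rounds: each informed vertex informs one uninformed tree neighbour per round, and the number of informed vertices increases by at least one per round as long as the tree is not fully informed.

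The subtlety is that $\hat c$ stands for several distinct clique vertices, since different $S$-vertices may attach to $\hat c$ through different clique vertices. I would resolve this by noting that informing $\hat c$ means informing some clique vertex $c$. Its tree children in $S$ are attached to possibly different clique vertices $c'$, but $c$ can first reach $c'$ through the clique edge $cc'$. This costs extra rounds, so a careful count is needed. This is the main obstacle. A cleaner route is to let the whole clique act as one super-vertex that is doubling anyway. Each time a new child $u\in S$ of $\hat c$ must be informed through $c'$, we pay one round for informing $c'$ and one for the edge $c'u$. In exchange, each such step also enlarges the informed part of $C$, which is credited to the second phase. So I would run the accounting jointly, not phase by phase.

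\textbf{Second phase and joint accounting.} I would fall back on a potential argument. Let $a$ be the number of uninformed vertices of $S$ and $b$ the number of informed clique vertices. I would show that each round either decreases $a$ by one or at least doubles $b$ (capped at $N$), as long as some vertex remains uninformed. The key step is a local choice. If some uninformed $u\in S$ has an informed neighbour, one informed vertex informs $u$, while all other informed clique vertices inform new clique vertices; this halves-and-grows $b$ at least by $b-1$ more. If no such neighbour exists, then by connectivity there is a shortest path to $S$-vertices running through uninformed clique vertices, and the informed clique vertices double freely. Checking that this potential gives a total of at most $|S|+\ceil{\log_2 N}$ rounds is the delicate arithmetic I expect to be hardest: a round spent informing $S$ may lose a factor of doubling on $b$. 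I would handle it by charging at most one clique transmission per $S$-informing round, and by showing that $b$ still satisfies $b\ge 2^{r-j}$ after $r$ rounds in which $j$ rounds informed $S$-vertices.
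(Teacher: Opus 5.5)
The paper does not prove this lemma itself (it is imported from Egami et al.), so your argument has to stand on its own. Your final idea is the right one: every round either informs a new vertex of $S$ or doubles the informed part of the clique. As written, though, the proposal is not yet a proof.

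First, the two-phase invariant is false, not merely hard to establish. Take $C=\{s,c\}$ and $S=\{u\}$ with $u$ adjacent only to $c$; then $u$ cannot be informed within $|S|=1$ round. So discard that plan entirely, including the contracted vertex $\hat c$, rather than trying to repair it.

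Second, the invariant $b\ge 2^{r-j}$ you propose for the joint accounting fails whenever $s\in S$. Then $b=0$ at $r=j=0$, ``doubling'' $b=0$ produces nothing, and your local rule never tells a vertex of $S$ to inform a clique vertex. The invariant also needs the cap, $b\ge\min\{N,2^{r-j}\}$. And in a round where no uninformed vertex of $S$ has an informed neighbour, you must observe that $b<N$, so that the doubling round is real progress. This holds because otherwise all of $C$ is informed, and connectivity gives an uninformed vertex of $S$ with an informed neighbour.

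The missing idea for $s\in S$ is that only $|S|-1$ vertices of $S$ are uninformed at the start. This saved unit pays for the single round in which the first clique vertex is informed. Such a round is available by connectivity whenever $b=0$ and no uninformed vertex of $S$ is adjacent to the informed set.

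With this in place, the ``delicate arithmetic'' disappears if you count rounds by type instead of comparing $b$ with $r$:
rounds informing a new vertex of $S$ number at most $|S|$ if $s\in C$, and at most $|S|-1$ if $s\in S$;
the entry round occurs at most once, and only if $s\in S$;
since $b$ never decreases, rounds mapping $b$ to $\min\{2b,N\}$ with $1\le b<N$ number at most $\lceil\log_2 N\rceil$.

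Equivalently, use the potential $a+\lceil\log_2(N/b)\rceil$ for $b\ge1$, and $a+1+\lceil\log_2 N\rceil$ for $b=0$. It starts at $|S|+\lceil\log_2 N\rceil$ and drops by at least one in every round until everything is informed. A round that informs a vertex of $S$ is charged to $a$, so losing a doubling in that round costs nothing.
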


We now present the improved algorithm parameterized by distance to clique.

\begin{theorem}\label{thm:algo:dc}   
    {\TB} can be solved in time $k^{\bO(k)}n^{\bO(1)}$,
    where $k$ is the distance to clique of the input graph.
\end{theorem}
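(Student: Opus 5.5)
We follow the template-plus-matching scheme used for \cref{thm:algo:vc}. First compute a minimum modulator $S_0$ to a clique in time $2^{\bO(k)}n^{\bO(1)}$. Set $S:=S_0\cup\{s\}$ and $C:=V\setminus S$, so $C$ is a clique of size $N$. By \cref{lem:dtc:upper-bound} we may assume $t< |S|+\ceil{\log_2 N}$, since otherwise we answer yes. We also assume $t<n$.

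Fix an optimal protocol with broadcast tree $T$. Call a clique vertex \emph{marked} if it has a child in $S$. As in the vertex-cover case, every vertex of $S\setminus\{s\}$ has a unique parent, so there are at most $|S|-1$ marked vertices. The template has named vertices $S$, slots $Q$ for the marked vertices, and the auxiliary vertex $\phi$. Its arcs are the following: arcs $S\to S$, arcs $S\to Q$ and $\phi\to Q$ recording the entry of each slot (the entry comes from a clique vertex exactly when it comes via $\phi$), arcs $Q\to S$, and for each $u\in S$ the number of its ordinary transmissions into $C$ left anonymous.

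The main obstacle is bounding the labels. Rounds go up to $t$, which may be about $\log N$, so there are $(\log N)^{\bO(k)}$ label choices. Since $(\log n)^{k}\le k^{\bO(k)}+n^{\bO(1)}$, this bound is fine. Hence all labels range over $[t]$ and the number of templates is $k^{\bO(k)}(\log n)^{\bO(k)}=k^{\bO(k)}n^{\bO(1)}$. If needed, a laziness argument in the spirit of \cref{lem:prefix-bp} could instead push the interactions with $S$ into $\bO(k)$ rounds.

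Given a consistent template, we must choose distinct clique vertices for the slots, each adjacent to the $S$-vertices prescribed for it. Slots entered from $S$ additionally need the entering vertex as a neighbour. We then need to check that the clique can be completed. We build a $b$-matching in the spirit of \cref{thm:algo:vc}. Slot-vertices $w_q$ are joined by weight-$2$ edges to the compatible clique vertices $C_q$. Each $u\in S$ gets capacity equal to its free rounds, and its weight-$1$ edges go to clique neighbours; this models unmarked clique vertices informed directly from $S$.

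The remaining clique vertices are informed inside the clique. Instead of matching them individually, we use a greedy count. Every informed clique vertex, slot or otherwise, can inform a new clique vertex in each of its free rounds. So the number of clique vertices reachable by round $t$ is the largest number obtained by simulating the doubling process from the entry times given by the template and by the $S$-transmissions. These entry times and the free rounds of slots are fixed by the template.

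To make the matching capture this, we additionally guess, for each round $\rho$, the number of ordinary $S\to C$ transmissions at round $\rho$. This is a vector in $[0,k]^{t}$, which again costs $k^{\bO(t)}$ and is too many. So we instead guess only the number of such transmissions of each $u\in S$, which is at most $t$. We then observe that each such transmission should be scheduled as early as possible, so its timing is determined by the free rounds of $u$. With all entry times known, the greedy doubling count checks whether all $N$ clique vertices are covered. The matching only has to ensure that the named slots and the directly informed vertices are distinct and adjacent, and is solved as a $b$-matching with capacities equal to the guessed counts and target weight $2|Q|$ plus the total count.

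Correctness is proved in two directions, as in the claims for \cref{thm:algo:vc}. Forward, take an optimal protocol, read off its template and counts, and obtain a matching. Backward, from a matching build the protocol and fill the rest of the clique greedily. I expect the main difficulty to be the exchange argument showing that the anonymous clique vertices are interchangeable, so that only counts and earliest scheduling matter.
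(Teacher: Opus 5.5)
Your proposal is correct and follows essentially the same route as the paper. It uses slots for the at most $|S|-1$ marked clique vertices plus the auxiliary $\phi$, and labels in $[t]$ with $t<|S|+\lceil\log_2 N\rceil$, so that $(k+\log n)^{\mathcal{O}(k)}=k^{\mathcal{O}(k)}n^{\mathcal{O}(1)}$. It then guesses a load vector $\ell$ and builds a $b$-matching with weight-$2$ slot edges, capacities $\ell(u)$ and target $2|Q|+\sum_u\ell(u)$. Finally, it completes the clique greedily round by round, with each $u$'s $\ell(u)$ entries placed in its earliest free rounds.

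Two details should be stated explicitly. First, the consistency check must require that $u$ has at least $\ell(u)$ free rounds after $\tau(u)$. Second, in each round the greedy simulation must reserve informed, non-busy clique vertices for the $\phi$-slots at their guessed labels, and must fail if there are too few. The interchangeability argument you expect to be the main difficulty is handled in the paper by exactly this kind of round-by-round count comparison with the optimal protocol.
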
                        

\begin{proof}
    Let $(G,s,t)$ be an instance of {\TB}, where $G=(V,E)$ is connected and $|V|=n$.
    We compute a minimum modulator $S_0$ to a clique in time $2^{\bO(k)}n^{\bO(1)}$.
    Let $S := S_0 \cup \{s\}$, $h := |S| \le k+1$, and $C:= V \setminus S$ where $N := |C|$.
    It holds that $s \in S$ and $G[C]=G-S$ is a clique.
    Assume that $t < h + \ceil{\log_2 N}$, since otherwise the answer is trivially yes by \cref{lem:dtc:upper-bound}.

    \proofsubparagraph{Template.}
    Consider an optimal protocol $P$ and let~$T$ be its broadcast tree.
    We call a clique vertex \emph{marked} if it has a child in~$S$ in~$T$.
    Assigning to each marked clique vertex one of its children in
    $S\setminus\{s\}$ gives an injection into $S\setminus\{s\}$,
    because each vertex has a unique parent in~$T$.
    Hence there are at most $h-1$ marked clique vertices.

    We enumerate all possible guesses for this bounded part of the protocol.
    For each $w\in\{0,\ldots,h-1\}$, we introduce a set $Q$ of $w$ \emph{marked slots}.
    A marked slot is a placeholder for a marked vertex of~$C$;
    it will later be realized by a distinct clique vertex.
    The named vertices of the template are the vertices of~$S$.

    We define a template $\mathcal{T}=(H,s,S,Q,\lambda)$, with
    $H=(S\cup Q\cup\{\phi\},A)$, by a parent map, where $\phi$ is an auxiliary vertex.
    For every $u\in S\setminus\{s\}$, we guess a parent $p(u)\in (S\cup Q)\setminus\{u\}$.
    For every slot $q\in Q$, we guess $p(q) \in S\cup\{\phi\}$,
    where~$\phi$ means that the clique vertex realizing~$q$ is informed from
    inside the clique.
    We add the arc $p(z)z$ for every $z\in (S\setminus\{s\})\cup Q$.
    Thus every vertex of $(S\setminus\{s\})\cup Q$ has a unique incoming arc,
    while~$\phi$ has none.

    We also guess the label of every arc of~$H$ from~$[t]$, as well as a \emph{load vector}
    $\ell \colon S \to \{0,\ldots,t\}$, where $\ell(u)$ is the number of unmarked
    clique vertices informed directly by~$u$.
    The marked slots informed by~$u$ are already visible in the template.
    For every $u\in S$, let
    \[
        \children_S(u):=\setdef{u'\in S \setminus \{s\}}{p(u')=u}
        \quad\text{and}\quad
        \children_Q(u):=\setdef{q\in Q}{p(q)=u}.
    \]

    For the rest of the proof, the guessed object is the pair $(\mathcal{T} = (H,s,S,Q,\lambda),\ell)$,
    where, for notational convenience, we write $\tau=\tau_\lambda$.
    We call this pair \emph{consistent} if $\mathcal{T}$ satisfies the common consistency checks of
    \cref{sec:templates} and the following additional conditions:
    \begin{itemize}
        \item every slot has at least one child in~$S$;
        \item the outgoing arcs of each slot have pairwise distinct labels;
        \item for every $u\in S$,
        $|\children_S(u)|+|\children_Q(u)|+\ell(u) \le t-\tau(u)$.
    \end{itemize}
    The last condition says that, after $u$ is informed, there are enough rounds
    for the represented outgoing arcs of~$u$ and for the $\ell(u)$ additional
    entries from~$u$ into the clique.

    \begin{claim}\label{clm:dtc:templates}
        The number of consistent pairs $(\mathcal{T},\ell)$ considered is
        $h^{\bO(h)}n^{\bO(1)}$.
    \end{claim}

    \begin{claimproof}
        Fix $|Q|=w\le h-1$.  There are at most $(2h)^h$ choices for the parents
        of the vertices in $S\setminus\{s\}$, and at most $(h+1)^h$ choices for
        the parents of the slots.  Since
        $t< h+\ceil{\log_2 n}$, the labels and the load vector together have
        $(h+\log n)^{\bO(h)}$ possibilities in total.
        Hence the number of guesses is
        \[
            h^{\bO(h)}(h+\log n)^{\bO(h)}.
        \]
        If $\log n\le h^2$, this is $h^{\bO(h)}$.
        Otherwise $h<\sqrt{\log n}$, and the factor
        $(h+\log n)^{\bO(h)}$ is bounded by $n^{\bO(1)}$.
    \end{claimproof}

    \proofsubparagraph{Realizing the template.}
    Fix a consistent pair $(\mathcal{T},\ell)$.  For a slot $q\in Q$, let
    \[
        R(q):=\setdef{u\in S \setminus \{s\}}{p(u)=q}
        \cup
        \begin{cases}
            \{p(q)\}, & \text{if }p(q)\in S,\\
            \varnothing, & \text{if }p(q)=\phi.
        \end{cases}
    \]
    A clique vertex can realize~$q$ only if it is adjacent to every vertex of
    $R(q)$: it has to inform the children of~$q$ in~$S$, and,
    when $p(q)\in S$, it has to receive the message from~$p(q)$.
    Let $C_q:=\setdef{v\in C}{R(q)\subseteq N_G(v)}$ denote the set of clique
    vertices that can realize~$q$.

    We use one edge-weighted {\bMatching} instance for two tasks:
    realizing the slots by distinct clique vertices, and choosing the
    $\ell(u)$ unmarked clique vertices that are informed directly by each
    $u\in S$.
    Its vertex set contains one vertex for every $v\in C$, one vertex $w_q$ for
    every slot $q\in Q$, and one vertex for every $u\in S$ with $\ell(u)>0$.
    The capacities are
    \[
        \beta(v)=1\quad(v\in C),\qquad
        \beta(w_q)=1\quad(q\in Q),\qquad
        \beta(u)=\ell(u)\quad(u\in S,\ \ell(u)>0).
    \]
    We add an edge of weight~$2$ between $w_q$ and every vertex of~$C_q$.
    We also add an edge of weight~$1$ between $u\in S$ and $v\in C$ whenever
    $uv\in E(G)$ and $\ell(u)>0$.

    A feasible $b$-matching of weight
    \[
        2|Q|+\sum_{u\in S}\ell(u)
    \]
    is called a \emph{realizing matching} for the pair $(\mathcal{T},\ell)$.
    This is the maximum possible weight: the slot vertices contribute at most
    $2|Q|$, and the vertices of~$S$ contribute at most $\sum_{u\in S}\ell(u)$.
    Hence a realizing matching saturates every slot vertex $w_q$ and uses
    exactly $\ell(u)$ edges incident with each $u\in S$.
    Equivalently, it chooses vertices $v_q\in C_q$ for all slots $q\in Q$
    and sets $B_u\subseteq N_G(u)\cap C$ with $|B_u|=\ell(u)$ for every
    $u\in S$, such that all vertices in
    $\setdef{v_q}{q\in Q}\cup\bigcup_{u\in S}B_u$ are distinct.
    We discard the pair if no realizing matching exists.
    This test is polynomial-time, since maximum-weight {\bMatching} is
    polynomial-time solvable~\cite{book/Schrijver03}.

    \proofsubparagraph{Extending a realization.}
    Fix a consistent pair $(\mathcal{T},\ell)$ and any realizing matching~$M$.
    For every slot $q\in Q$, let $v_q\in C$ be the vertex matched to~$w_q$.
    For every $u\in S$, let \[B_u:=\setdef{v\in C}{uv\in M}\] if $\ell(u)>0$, and
    let $B_u:=\varnothing$ otherwise.  Then $|B_u|=\ell(u)$.
    Since $C$ is a clique, the greedy completion below depends on~$M$ only
    through the fact that every slot is realized and that $|B_u|=\ell(u)$ for
    every $u\in S$; the actual identities of the chosen clique vertices are
    irrelevant.

    We first schedule all template arcs not leaving~$\phi$ at their labels:
    arcs inside~$S$ use the corresponding edges of~$G$,
    an arc $u q$ is realized as $u v_q$, and an arc $q u$ is realized as
    $v_q u$.
    For every $u\in S$, let
    \[
        F_u:=\setdef{\rho\in[\tau(u)+1,t]}
        {\text{no arc of }H\text{ leaves }u\text{ with label }\rho}.
    \]
    By consistency, $|F_u|\ge \ell(u)$.
    We schedule the vertices of~$B_u$ in the $\ell(u)$ earliest rounds of~$F_u$.

    It remains to complete the broadcast inside the clique.
    All clique vertices except the vertices $v_q$ realizing slots and the
    vertices in the sets~$B_u$ are anonymous.
    Slots with parent in~$S$ and vertices in the sets~$B_u$ have fixed incoming transmissions from~$S$.
    Slots with parent~$\phi$ must instead be informed from inside the clique at their incoming label.
    Starting with no informed clique vertex, we proceed round by round.
    At the beginning of round~$\rho$, let $N_\rho$ be the number of informed clique vertices.
    Let \[D_\rho:=\setdef{q\in Q}{\text{there is an arc }e=(q,u)\in A\text{ with }u\in S\text{ and }\lambda(e)=\rho}\]
    be the slots that must inform a vertex of~$S$ in this round.  These slots
    are already committed to a transmission to~$S$, and therefore are not
    available for clique-internal transmissions in round~$\rho$.
    Let \[E_\rho:=\setdef{q\in Q}{p(q)=\phi \text{ and } \tau(q)=\rho}\]
    be the slots that must be informed from inside the clique in this round.
    If some slot of $D_\rho$ is not informed at the beginning of the round, or if
    $N_\rho-|D_\rho|<|E_\rho|$, the test fails.
    Otherwise, there are $N_\rho-|D_\rho|$ informed clique vertices available
    for clique-internal transmissions.  We use $|E_\rho|$ of them to inform
    the slots of~$E_\rho$, and the remaining available transmissions to inform
    as many still uninformed anonymous clique vertices as possible.
    At the end of the round, we mark as informed all clique vertices reached in
    this round: the slots of~$E_\rho$, the anonymous clique vertices just
    chosen greedily, and the clique vertices whose fixed incoming transmission
    from~$S$ occurs in round~$\rho$.  The latter are the slots with parent
    in~$S$ and the vertices of the sets~$B_u$ scheduled in round~$\rho$.

    Since the test only uses the fixed time-stamps and the numbers
    $|B_u|=\ell(u)$, it is independent of the particular realizing matching.
    We call the pair $(\mathcal{T},\ell)$ \emph{extendable} if it admits a
    realizing matching and this greedy test succeeds.

    \begin{claim}\label{clm:dtc:complete}
        If $b(G,s)\le t$, then there exists a consistent pair
        $(\mathcal{T},\ell)$ that is extendable.
    \end{claim}

    \begin{claimproof}
        Suppose that $b(G,s)\le t$, and fix a protocol~$P$ of length at
        most~$t$ with broadcast tree~$T$.  Create one slot for every marked
        clique vertex of~$P$.  For every vertex of $S\setminus\{s\}$, record
        its parent in~$T$; if the parent is a marked clique vertex, record the
        corresponding slot.  For every slot, record its parent if that parent
        lies in~$S$, and record~$\phi$ otherwise.  The labels of the incoming
        arcs are the reception times read directly from~$P$.  Finally, set
        $\ell(u)$ to be the number of unmarked
        clique vertices informed directly by~$u$ in~$P$.  This gives one of
        the pairs enumerated above, and it is consistent.

        We now build a realizing matching.  For every slot $q$, match $w_q$ to the actual
        clique vertex realizing~$q$ in~$P$.  This edge exists because all
        adjacencies to vertices of~$S$ required by the template occur in~$P$.
        For every clique vertex that is informed by some $u\in S$ and is not
        represented by a slot, add the edge between $u$ and that clique vertex.
        The chosen edges respect all capacities, because a clique vertex appears
        only once in the broadcast tree, and each vertex $u$ informs exactly
        $\ell(u)$ unmarked clique vertices.  The weight is
        $2|Q|+\sum_{u\in S}\ell(u)$, so this is a realizing matching.

        It remains to see that the greedy extension test succeeds.  Compare it
        with the restriction of~$P$ to the clique.  The template fixes the same
        slot-to-$S$ transmissions as~$P$, and every slot with parent~$\phi$ must
        be informed in~$P$ at its guessed round.
        For each $u\in S$, the extension schedules exactly as many unmarked
        clique vertices from~$u$ as~$P$ does, and it uses the earliest free
        rounds of~$u$ for them.  Hence these entries occur no later than the
        corresponding entries of~$P$.
        Since unmarked clique vertices are interchangeable, an induction over
        the rounds shows that, just before each round, the greedy extension has
        informed at least as many clique vertices as~$P$ has, while the same
        slots are busy sending to~$S$.  Thus whenever~$P$ can inform the slots
        with $p(q)=\phi$, the greedy extension can do so as well; after that it
        uses all remaining clique-internal capacity.  Since~$P$ informs all
        clique vertices by round~$t$, the greedy extension also does.  Hence the
        pair $(\mathcal{T},\ell)$ is extendable.
    \end{claimproof}

    \begin{claim}\label{clm:dtc:sound}
        If there exists a consistent pair $(\mathcal{T},\ell)$ that is extendable,
        then $b(G,s)\le t$.
    \end{claim}

    \begin{claimproof}
        Suppose that a consistent pair $(\mathcal{T},\ell)$ is extendable, and fix a realizing
        matching and greedy extension witnessing this.  The construction above
        gives a protocol of length at most~$t$.  All transmissions are along
        edges of~$G$: this follows from consistency for arcs inside~$S$, from
        the definition of the sets~$C_q$ for arcs between~$S$ and slots, from the
        matching edges between~$S$ and~$C$, and from the fact that~$C$ is a
        clique for all clique-internal transmissions, including those used to
        satisfy arcs leaving~$\phi$.

        No vertex transmits before being informed or twice in the same round.
        For vertices of~$S$ this follows from consistency and from the choice of
        the free rounds~$F_u$; for clique vertices it is enforced by the greedy
        simulation, which treats slot-to-$S$ transmissions as occupied rounds.
        Finally, every vertex is informed by construction.  Following incoming
        transmissions backward strictly decreases time, and therefore reaches
        the source~$s$.  Hence $b(G,s)\le t$.
    \end{claimproof}

    By the two preceding claims, testing all pairs $(\mathcal{T},\ell)$ decides whether
    $b(G,s)\le t$.
    By \cref{clm:dtc:templates}, the number of pairs is
    $h^{\bO(h)}n^{\bO(1)}=k^{\bO(k)}n^{\bO(1)}$, since $h\le k+1$.
    For each pair we solve one polynomial-size {\bMatching} instance to
    test whether it admits a realizing matching and, if so, run the polynomial-time
    greedy extension test.
    Hence the total running time is $k^{\bO(k)}n^{\bO(1)}$.
\end{proof}

\section{Conclusion}

We gave faster FPT algorithms for \TB\ under various parameterizations, leading to
faster and simpler algorithms that make use of maximum-weight \textsc{$b$-Matching}.
As an interesting open question, we ask whether the problem admits
an algorithm running in time $2^{\bO(\vc)} n^{\bO(1)}$ or perhaps whether a $\vc^{o(\vc)}$ dependence can be excluded under the ETH.
Given the problem's hardness, another interesting question is which other structural parameterizations
render it FPT; is that the case for cutwidth, or perhaps bandwidth?

\bibliography{refs}

\end{document}